\newtheorem{theorem}{Theorem}
\theoremstyle{plain}
\newaliascnt{corollary}{theorem}  
\newtheorem{corollary}[corollary]{Corollary}
\newaliascnt{lemma}{theorem}  
\newtheorem{lemma}[lemma]{Lemma}  
\newaliascnt{proposition}{theorem}  
\newtheorem{result}[theorem]{Result}
\newtheorem*{result*}{Result}
\theoremstyle{definition}
\newtheorem{definition}[theorem]{Definition}
\theoremstyle{remark}
\newtheorem*{claim*}{Claim}
\newtheorem{remark}[theorem]{Remark}
\numberwithin{equation}{section}
\DeclareMathOperator{\orb}{orb}
\newcommand{\abs}[1]{\left|#1\right|}
\newcommand{\norm}[1]{\left\lVert#1\right\rVert}
\newcommand{\reals}{\ensuremath{\mathbb{R}}}
\newcommand{\complexes}{\ensuremath{\mathbb{C}}}
\mathchardef\pFcomma=\mathcode`, 
\renewcommand{\leq}{\leqslant}
\renewcommand{\geq}{\geqslant}
\title{Symmetric Potentials Beget Symmetric Ground States}
\author{Richard Chapling}
\affil{Department of Applied Mathematics and Theoretical Physics, \\ University of Cambridge, Cambridge, England
}
\begin{document}

\newcommand{\Dim}{d}
\newcommand{\cX}{\mathcal{X}}
\newcommand{\cY}{\mathcal{Y}}

\maketitle

\begin{abstract}
  Using an unusual type of symmetric average, we show that for several common equations involving quite general potentials possessing symmetry, the ground state, if it exists, must also be symmetric.
\end{abstract}

\section{Introduction}
\label{sec:introduction}

It is part of folklore in mathematical physics that in many cases a symmetric potential produces a symmetric ground state. It is our intention here to prove this in a fairly general setting, and for a selection of terms in the equations. We shall also consider only problems in which the boundary conditions are homogeneous and Dirichlet, \( u=0 \) on \(\partial X \); this avoids situations such as those discussed in \cite{Esteban:1991qy,Coti-Zelati:1990fj}, where symmetry breaking is known to occur.

In cases where the potential is decreasing, this is readily demonstrated, as may for example be found in \cite{lieb2001analysis}, Ch. 11 for the Schr\"odinger equation and \cite{2008arXiv0807.4059S}, \cite{Choquard:2007fk} and \cite{Lieb:1977uq} for the Schr\"odinger--Newton equations in various settings; we give a summary of this in the next section.

The procedure that we employ goes by various names in the literature: rearrangement, symmetrisation, and in the case that the symmetry group is a rotation group, ``radialisation''.

We begin by giving definitions related to the spaces we shall work on, focussing on generality, and then give several examples. The third section introduces the terms in the energy functionals that we shall consider. The fourth discusses the standard approach for increasing potentials using symmetric--decreasing rearrangement.

In the fifth section, we introduce our new mean values and discuss its basic properties, most of which are similar to those of the standard \(L^p\)-norms or symmetric--decreasing rearrangement. In subsequent sections, we discuss each term given in the second section, and finally in Section 8 give the proof of the main theorem. The last two sections discuss some further generalisations, to certain relativistic kinetic energies, and other nonlinear terms.

\section{Definitions}
\label{sec:definitions}

In this section we define precisely the various terms we use in this paper.

Throughout, we take 
\begin{itemize}
\item \(X\) a measurable space with a weak differentiation operator \(\nabla\) defined on it.
\item \(G\) a compact group acting on \(X\);
\item \(dg\) an invariant probability measure on \(G\),
\item \(dx\) a \(G\)-invariant measure on \(X\).
\end{itemize}

The operator \( \nabla \) is required to act is a similar way to the weak derivative on \( \reals^n \): it satisfies the \( X \to \complexes \to \complexes \) chain rule \( \nabla f(u(x)) = df(u(x)) \cdot \nabla u(x) \) in the weak sense.

We give a small list of examples of this kind of system, both to illustrate the wide application of this technique, and to enable the reader to fix concepts in their mind:
\begin{enumerate}
\item Let \(X = \reals \), and take \(G \cong Z_2 \), the non-identity element acting by reflection in the origin, \( x \mapsto -x \). Then \( \int_G f(g) \, dg = \frac{1}{2}(f(1)+f(-1)) \), and standard Lebesgue measure \(dx\) is \(G\)-invariant.
\item Let \( X = \complexes \), and take \(G\) to be the \(n\)th roots of unity, acting on \(X\) by rotations about the origin. Then \( \int_G f(g) \, dg = \frac{1}{n}\sum_{r=0}^{n-1} f(\omega^r) \), and \( d\bar{z} \, dz \) is an invariant measure on \(X\).
\item Let \( X = \reals \times S^1 \), and let \( G \cong [0,2\pi) \) with addition modulo \(2\pi\) as the group operation. \(G\) acts on \(X\) by ``rotation of the second factor'', in that \( \theta.(x,e^{iy}) = (x,e^{i(y+\theta)}) \). The invariant measure on \(G\) is simply \( \int_G f(g) \, dg = \int_0^{2\pi} f(\theta) \, d\theta \), and \( dx \times dy\) is invariant on \(X\).
\item Let \( X=\reals^n \), \( G=SO(\Dim) \), acting on \(X\) in the usual way. There is a unique Haar (probability) measure on \(G\), and ordinary Lebesgue measure \( dx \) is again invariant.
\end{enumerate}

The first two examples illustrate that we shall not need to take \(G\) to be a Lie group: discrete symmetry groups will also work. The last example is that most commonly considered in this type of work: spherical symmetry.

We take \( \cX \) to be a complete space of functions on \(X\), and have the crucial:

\begin{definition}[Ground state]
  Let \(E : \cX \to [-\infty,\infty] \). A function \(u\) is said to be an \emph{unconstrained ground state} of \(E\) if
  \begin{equation*}
    E[u] = \inf_{v \in \cX} E[v] > -\infty.
  \end{equation*}
  Similarly, if we let \( \cY\) be a norm-closed subset of \(\cX\) and then \(u\) is said to be a \emph{constrained ground state} of \(E\) if
  \begin{equation*}
    E[u] = \inf_{v \in \cY} E[v] > -\infty.
  \end{equation*}
\end{definition}

In fact, while the distinction between these is frequently important, the difference between these two is not actually of great significance in the cases we shall consider, at least when constructing the proofs.

\section{Energy Functionals Considered}
\label{sec:lagr-cons}

We now give in detail the expressions of energy functionals that shall be susceptible to our proof. We may have the following:
\begin{enumerate}
\item The \emph{kinetic term},
  \begin{equation}
    T[u] := \int_X \abs{\nabla u}^2 = \norm{ \abs{\nabla u} }_2^2.
  \end{equation}
\item The \emph{external potential term},
  \begin{equation}
    P[\abs{u}^2] := \int_X V \abs{u}^2,
  \end{equation}
  where \( V \) is a given fixed function (the \emph{potential}), which for the purposes of this paper is taken to be not equivalent to the zero function.
\item The \emph{self-potential term},
  \begin{equation}
    Q[\abs{u}^2] := \int_X \int_X \abs{u(x)}^2 h(x,y) \abs{u(y)}^2 \, dx \, dy,
  \end{equation}
  where \(h\) is a positive-definite function (that is, \(Q(f) \geq 0\) for all \(f\) such that the integral converges).
\end{enumerate}

In the last sections we shall also look at several other special cases, including a relativistic kinetic term on \( \reals^n\), and a nonlinear term.

\begin{remark}
  The power \(2\) that appears in each of these terms may be replaced by \(1 \leq p < \infty \), but we give our main results using \(2\) since that is the case normally considered in quantum mechanics.
\end{remark}

The constraint we consider is that the total mass of the function \(u\) is known,
\begin{equation}
  N[u] = \int_X \abs{u}^2 = N;
\end{equation}
this of course recalls the idea that \(N=1\) for a wavefunction.\footnote{The constant \(n\) is unimportant, provided it is positive: it can be set to \(1\) by rescaling \(u\).}

We shall prove several theorems of the following form:

\begin{theorem}[Ground States of Symmetric Potentials are Symmetric]
  Let \( X,G,dg,dx \) be as in \S~\ref{sec:definitions}, \( V(g.x) = V(x) \) and \( h(g.x,g.y) = h(x,y) \) for every \(g \in G\), and
  \begin{equation}
    E[u] := T[u] + P[\abs{u}^2] + b Q[\abs{u}^2]
  \end{equation}
  Suppose one of the following holds:
  \begin{enumerate}
  \item \( X=\reals^{\Dim} \), \(G=SO(\Dim)\), \( b \geq 0\)
  \item \(b>0\).
  \end{enumerate}
Then if \(u\) is a (constrained or unconstrained) ground state of \(E\), it must satisfy \( u(g.x)=u(x) \) for every \(g \in G\).
\end{theorem}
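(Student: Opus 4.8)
The plan is to build a single symmetrising operator that preserves mass, fixes the external potential term exactly, and lowers both the kinetic and self-potential terms, and then to read off symmetry from the equality case. Writing \( \rho = \abs{u}^2 \), I would define the \emph{symmetric mean} through its density
\[
  (Mu)(x) := \Bigl( \int_G \abs{u(g.x)}^2 \, dg \Bigr)^{1/2}, \qquad \text{so that } \abs{Mu}^2(x) = \int_G \rho(g.x)\,dg
\]
is simply the \(G\)-average of \(\rho\), and \(Mu\) is \(G\)-invariant by construction. Before anything else I would replace \(u\) by \(\abs{u}\): since \(E\) depends on \(u\) only through \(\abs{u}^2\) and, via the diamagnetic inequality \( \abs{\nabla\abs{u}} \leq \abs{\nabla u} \), through \(\abs{\nabla\abs{u}}\), passing to \(\abs{u}\) cannot raise \(E\), while for a ground state equality in the diamagnetic inequality pins the phase to a constant. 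Hence it suffices to prove that \(\abs{u}\) is \(G\)-invariant, and we may assume \(u \geq 0\).

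Next I would verify the three comparisons, each an isolated lemma drawn from the earlier sections. (a) \emph{Mass}: \(N[Mu] = N[u]\), immediately from Fubini and the \(G\)-invariance of \(dx\). (b) \emph{External potential}: \(P[\abs{Mu}^2] = P[\abs{u}^2]\) \emph{exactly} --- substituting \(y = g.x\) and using \(V(g.x)=V(x)\) with invariance of \(dx\) turns \(\int_X V \int_G \rho(g.\cdot)\,dg\,dx\) back into \(\int_X V\rho\); note this needs only \(G\)-invariance of \(V\), not monotonicity, which is exactly the advantage over symmetric-decreasing rearrangement. (c) \emph{Self-potential}: writing \(Q[\rho]=B(\rho,\rho)\) for the positive semidefinite bilinear form \(B\) with kernel \(h\), the diagonal invariance \(h(g.x,g.y)=h(x,y)\) gives \(B(\rho_g,\rho_g)=B(\rho,\rho)\) for \(\rho_g := \rho(g.\cdot)\), whence Cauchy--Schwarz yields \(B(\rho_g,\rho_{g'}) \leq B(\rho,\rho)\) for all \(g,g'\); integrating \(B(\abs{Mu}^2,\abs{Mu}^2)=\int_G\int_G B(\rho_g,\rho_{g'})\,dg\,dg'\) over the probability measure gives \(Q[\abs{Mu}^2]\leq Q[\abs{u}^2]\). (d) \emph{Kinetic}: using \(T[\sqrt{\rho}] = \tfrac14\int_X \abs{\nabla\rho}^2/\rho\) and the joint convexity of \((\rho,p)\mapsto\abs{p}^2/\rho\), Jensen's inequality across \(dg\) gives \(T[Mu]\leq \int_G T[\sqrt{\rho_g}]\,dg = T[u]\), the last equality holding because the action is by isometries.

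Summing with \(b\geq 0\) gives \(E[Mu]\leq E[u]\). Since \(Mu\) is \(G\)-invariant and, by (a), lies in the same (constrained or unconstrained) admissible class as \(u\), a ground state forces \(E[Mu]=E[u]\), hence equality in every one of (b)--(d); this is exactly why the constrained/unconstrained distinction is immaterial. The hypotheses now enter, and here I expect the genuine work to lie: I must promote equality to the pointwise statement \(\rho_g = \rho\). Under hypothesis (2), \(b>0\) makes \(Q[\abs{Mu}^2]=Q[\abs{u}^2]\) effective, so the Cauchy--Schwarz step is saturated; combining this with the kinetic equality (which forces \(\nabla\log\rho_g\) to be independent of \(g\)) and the common mass \(\int_X\rho_g=N\), I would deduce \(\rho_g=\rho\) for a.e.\ \(g\). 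Under hypothesis (1), with \(X=\reals^{\Dim}\) and \(G=SO(\Dim)\), there is no help from \(Q\) when \(b=0\), and instead I would invoke the strict P\'olya--Szeg\H{o}-type kinetic inequality available in this classical setting to conclude that equality in (d) forces radial symmetry of \(\rho\).

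The main obstacle is precisely this equality analysis. Jensen's inequality for \(\abs{p}^2/\rho\) degenerates only along proportional \((\rho_g,\nabla\rho_g)\), so its equality case delivers \(\rho_g = c(g)\,\rho\) only up to a locally constant factor; ruling out a nontrivial \(c(g)\) --- for instance when \(\rho\) has disconnected support, or on the set where \(\rho\) vanishes and \(\log\rho\) is ill-defined --- is the delicate point that the structural hypotheses (1) and (2) are designed to handle. A secondary technical issue is ensuring that \(Mu\) is genuinely admissible, i.e.\ \(Mu\in\cX\) with \(T[Mu]<\infty\), which follows from (a)--(d) applied to a minimising \(u\).
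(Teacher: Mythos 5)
Your overall architecture coincides with the paper's: form the orbital mean \( M_2(u)(x) = \bigl(\int_G \abs{u(g.x)}^2\,dg\bigr)^{1/2} \), show it preserves \(N\) and \(P\) exactly and does not increase \(T\) and \(Q\), and extract symmetry from the equality cases, with \(Q\) supplying strictness when \(b>0\) and \(T\) supplying it in the \(SO(\Dim)\) case. At the lemma level, however, you take genuinely different routes for both inequalities. For \(Q\), the paper decomposes \( \rho = U + (\rho - U) \) with \( U = \int_G \rho_g\,dg \), shows the cross term vanishes because \( U * h \) is \(G\)-invariant, and reads off \( Q[\rho] = Q[U] + H(\rho-U,\rho-U) \geq Q[U] \); your route via \( B(\rho_g,\rho_{g'}) \leq \sqrt{B(\rho_g,\rho_g)B(\rho_{g'},\rho_{g'})} = B(\rho,\rho) \) and averaging over \(G\times G\) is equally valid, arguably slicker, and its equality case (Cauchy--Schwarz saturation plus equal \(B\)-norms, using strict positive-definiteness) delivers \( \rho_g = \rho \) just as the paper's orthogonal decomposition does. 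So case (2) of the theorem is fully handled by your argument. For \(T\), the paper proves a norm--gradient interchange inequality \( \abs{\nabla_x \norm{f_x}_p} \leq \norm{\abs{\nabla_x f_x}}_p \) (\autoref{thm:pwconvexgrad}) and applies it in polar coordinates; your route via \( T[\sqrt{\rho}] = \tfrac14\int \abs{\nabla\rho}^2/\rho \) and joint convexity of \( (\rho,p)\mapsto \abs{p}^2/\rho \) is the classical convexity-of-Fisher-information argument and gives the same inequality (the paper itself notes its corollary generalises Lieb--Loss Theorem 7.8, which is exactly this convexity).

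The genuine gap is in case (1), \( b \geq 0 \) on \( \reals^{\Dim} \) with \( G = SO(\Dim) \), where the kinetic term must carry the strictness alone. You propose to ``invoke the strict P\'olya--Szeg\H{o}-type kinetic inequality,'' but P\'olya--Szeg\H{o} compares \( T[u] \) with \( T[u^*] \) for the symmetric \emph{decreasing} rearrangement, and its equality case concludes \( u = u^*(\cdot - x_0) \); it says nothing about the comparison \( T[u] \) versus \( T[M_2(u)] \), and no off-the-shelf result closes that equality case for you. Your own Jensen analysis only yields that \( \nabla\log\rho_g \) is independent of \(g\), i.e.\ \( \rho_g = c(g)\rho \) with \(c(g)\) locally constant on the support, and you correctly flag that ruling out nontrivial \(c(g)\) (disconnected support, zero sets, phase) is ``the delicate point'' --- but you do not resolve it, and the mass constraint \( \int\rho_g = N \) alone does not resolve it when the support is disconnected and \(g\) permutes components. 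The paper closes exactly this gap by a separate analysis: the equality case of \autoref{thm:pwconvexgrad} forces a factorisation \( f(x,y) = k(y)F(x) \) with \(k\) a unitary character of \(G\), and the polar decomposition \( \reals^{\Dim} \cong \reals^+ \times S^{\Dim-1} \) then yields that \(u\) is a.e.\ a nonnegative radial function (\autoref{thm:symkinen}). Without an argument of this kind, your proof establishes the theorem only under hypothesis (2).
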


To do this, we shall show that averaging \(u\) over \(g\) cannot increase each of the terms in \(E\), and in the circumstances given, \(T\) and \(Q\) actually decrease. Again, we make no statement about existence: this is only a necessary condition.

The slightly strange set of conditions are explained when we discuss what is required to force \(T\) to decrease.

\begin{remark}
  The compactness assumption is essential: if we consider a \(V\) on the plane that is invariant in one coordinate, then with \(G\) the group of translations in this coordinate, the \(G\)-invariant functions will not be integrable.
\end{remark}

\section{Increasing Potentials: The Symmetric Decreasing Rearrangement}
\label{sec:incr-potent-symm}

We shall first consider a case that can be handled by the usual technique used to produce symmetric solutions, in order to understand what is required. 

\begin{theorem}
  Let \(X =\reals^{\Dim}\), and assume that \( h(x,y)=h(x-y) \). Let \( V,h \in L^1_{\text{loc}}(\reals^{\Dim}) \) be spherically symmetric and increasing, and let \( V \) be strictly increasing. Then if \(  E[u] = T[u] + P[\abs{u}^2] + b Q[\abs{u}^2] \) (\(b\geq 0\)) has a normalised ground state \(u\) (i.e. \(u \in H^1(\Omega)\) and \( \norm{u}_2=1 \)), \(u\) is spherically symmetric and decreasing.
\end{theorem}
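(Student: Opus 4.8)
The plan is to show that the symmetric decreasing rearrangement of the modulus, $u^{*} := (|u|)^{*}$, is an admissible competitor that does not raise $E$, and then to use minimality of $u$ to force equality in every estimate, from which strict monotonicity of $V$ pins down the symmetry. I would first recall the three standard facts about $f \mapsto f^{*}$ on $\reals^{\Dim}$: it preserves $L^{p}$-norms (so $\norm{u^{*}}_{2} = \norm{\,|u|\,}_{2} = 1$ and $u^{*}$ lies in the constraint class), it commutes with composition by increasing functions on $[0,\infty)$ (so $(|u|^{2})^{*} = (u^{*})^{2}$), and it does not increase the Dirichlet energy (Pólya--Szegő). Together with the pointwise bound $|\nabla |u|| \leq |\nabla u|$ valid in $H^{1}$, this gives $T[u^{*}] \leq T[|u|] \leq T[u]$.

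For the external potential I would use the layer-cake representation $|u|^{2} = \int_{0}^{\infty} \mathbf{1}_{\{|u|^{2}>s\}}\,ds$ to write $P[|u|^{2}] = \int_{0}^{\infty}\bigl(\int_{\{|u|^{2}>s\}} V\bigr)\,ds$. Since $V$ is spherically symmetric and increasing, among all sets of a prescribed measure the quantity $\int_{A} V$ is minimised by the centred ball of that measure (a bathtub-type comparison: on $A\setminus B$ one has $V \geq V(r_{B})$ and on $B\setminus A$ one has $V \leq V(r_{B})$, while $|A\setminus B| = |B\setminus A|$). As $\{(u^{*})^{2}>s\}$ is exactly the centred ball of the same measure as $\{|u|^{2}>s\}$, integrating in $s$ yields $P[(u^{*})^{2}] \leq P[|u|^{2}]$.

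The self-interaction is the delicate term, and I expect it to be the main obstacle: Riesz's rearrangement inequality is tailored to \emph{decreasing} kernels, whereas here $h$ is increasing, so applied directly it points the wrong way. My plan is to exploit that adding a constant to $h$ shifts $Q$ only by $c\bigl(\int_{X}|u|^{2}\bigr)^{2} = cN^{2}$, a constant under the mass constraint, so I may assume $h \geq 0$; then, the superlevel sets of a radial increasing function being exteriors of centred balls, I write $h(z) = \int_{0}^{\infty}\bigl(1 - \mathbf{1}_{B_{\rho(t)}}(z)\bigr)\,dt$. Substituting gives $Q[f] = \int_{0}^{\infty}\bigl[N^{2} - \int\int f(x)\,\mathbf{1}_{B_{\rho(t)}}(x-y)\,f(y)\bigr]\,dt$, where now the inner double integral carries the symmetric \emph{decreasing} kernel $\mathbf{1}_{B_{\rho(t)}}$. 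Riesz then gives $\int\int f\,\mathbf{1}_{B}\,f \leq \int\int f^{*}\,\mathbf{1}_{B}\,f^{*}$, and the sign reversal in $N^{2} - (\cdot)$, together with $b \geq 0$, delivers $b\,Q[(u^{*})^{2}] \leq b\,Q[|u|^{2}]$.

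Summing the three estimates yields $E[u^{*}] \leq E[u]$; since $u$ is a ground state we also have $E[u] \leq E[u^{*}]$, so equality holds throughout. Here the hypothesis that $V$ is \emph{strictly} increasing provides the rigidity: in the bathtub comparison the centred ball is the unique minimiser of $\int_{A} V$ at fixed measure, so equality in the $P$-estimate forces $\{|u|^{2}>s\}$ to coincide a.e.\ with a centred ball for almost every $s$, i.e.\ $|u|$ is symmetric decreasing and $|u| = u^{*}$. Equality in $T[|u|] = T[u]$ forces $u$ to have constant phase, which after factoring leaves $u = u^{*}$, spherically symmetric and decreasing. The only remaining care is the measure-theoretic bookkeeping needed to promote ``balls for almost every level'' into genuine symmetric-decreasingness of $u$, ensuring the exceptional null sets do not obstruct the conclusion.
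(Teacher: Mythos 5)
Your proposal is correct and follows essentially the same route as the paper: replace \(u\) by the symmetric decreasing rearrangement \(u^{*}\), show that none of \(T\), \(P\), \(Q\) increases, and use the strict monotonicity of \(V\) to force \(\abs{u}=u^{*}\) via minimality. The only differences are that you derive the increasing-kernel versions of the Hardy--Littlewood and Riesz inequalities yourself (via the bathtub principle and the complement/layer-cake trick) where the paper simply cites them as known, and that you handle the phase of \(u\) at the end somewhat more carefully than the paper does.
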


The standard method to deal with such cases is to introduce the \emph{symmetric decreasing rearrangement of \(u\)}, defined using the layer-cake decomposition
\begin{equation}
  \abs{u(x)} = \int_0^{\infty} \mu{\{ x: \abs{u(x)}<t \}} \, dt
\end{equation}
and that the rearrangement of a set \(A \subset \reals^{\Dim} \) is \(A^*\), the ball of volume \(\mu(A)\) centred at \(0\). Then
\begin{equation}
  u^*(x) = \int_0^{\infty} \mu{(\{ x: \abs{u(x)}<t \}^*)} \, dt.
\end{equation}
is spherically symmetric and decreasing.

It is easy to see that this operation preserves the norm:
\begin{equation}
  \norm{u}_p = \int_0^{\infty} t^{p-1} \mu{\{ x: \abs{u(x)}<t \}} \, dt = \int_0^{\infty} t^{p-1} \mu{\{ x: \abs{u^*(x)}<t \}} \, dt  = \norm{u^*}_p.
\end{equation}

To prove the theorem, we use several classical inequalities involving \( u^*\):
\begin{enumerate}
\item The \emph{P\'olya--Szeg\H{o} inequality}:\footnote{\cite{Polya:1945dz}, \cite{lieb2001analysis}, Lemma 7.17, equality cases by \cite{Brothers:1988qf}}
  \begin{equation}
    \label{eq:polya-szego}
    \norm{\abs{\nabla u}}_p \geq \norm{\abs{\nabla u^*}}_p,
  \end{equation}
  with equality for \(p>1\) if and only if \(u(x)=u^*(x-x_0)\) almost everywhere.
\item The \emph{Hardy--Littlewood inequality}\footnote{Strangely, although this was first proven in \cite{HLP:1952}, it is normally only attributed to Hardy and Littlewood.}: if \( u,v\) are integrable and nonnegative, then
  \begin{equation}
    \int_{\reals^{\Dim}} uv \leq \int_{\reals^{\Dim}} u^* v^*,
  \end{equation}
  and its simpler cousin: if \(V\) is strictly increasing, then
  \begin{equation}
    \int_{\reals^{\Dim}} V \abs{u}^2 \geq \int_{\reals^{\Dim}} V (u^*)^2,
  \end{equation}
  with equality if and only if \( \abs{u}=u^* \) almost everywhere.
\item The strict form of \emph{Riesz's inequality}\footnote{\cite{lieb2001analysis}, Theorem 3.9}: if \(u,v\) are nonnegative, and \( h:\reals^{\Dim} \to \reals \) is spherically symmetric and increasing, then
  \begin{equation}
    \int_{\reals^{\Dim}} \int_{\reals^{\Dim}} u(x) h(x-y) v(y) \, dx \, dy \geq \int_{\reals^{\Dim}} \int_{\reals^{\Dim}} u^*(x) h(x-y) v^*(y) \, dx \, dy;
  \end{equation}
  if  \( h \) is strictly increasing, equality occurs if and only if \( u(x) = u^*(x-x_0) \) and \( v(x)=v^*(x-x_0) \) almost everywhere, for some \(x_0 \in \reals^{\Dim} \).
\end{enumerate}

\begin{proof}[Proof of Theorem]
  Let \(u\) be a minimiser of \(E\). Because \( N[u]=\norm{u}_2^2=\norm{u^*}_2^2 \), \(u^*\) is also in the set over which \(E\) is minimised. Since \(u\) is a minimiser, the three integrals in \(E\) exist, and by the above three inequalities,
  \begin{align*}
    T[u] &\geq T[u^*], \\
    P[\abs{u}^2] &\geq P[\abs{u^*}^2], \\
    Q[\abs{u}^2] &\geq Q[\abs{u^*}^2].
  \end{align*}
  Equality occurs in the second of these if and only if \(u=u^*\), and hence if \(u\) is not symmetric decreasing, the energy can be decreased further by replacing \(u\) by \(u^*\), contradicting the minimality of \(u\). Therefore \(u\) must by symmetric decreasing to be a minimiser.
\end{proof}

\begin{remark}
  Strictly increasing is necessary, at least for this type of proof: one can imagine taking \( V \) to be some kind of bump function, so that contributions of \(u\), also a bump function, could be unchanged by a small translation.
\end{remark}

The proof of our result for potentials that are not strictly increasing will also proceed along these lines, but with replacements for the inequalities that have been used.

\section{The Orbital Mean and Its Basic Properties}
\label{sec:symm-its-basic}

We introduce the following notion:\footnote{This is a generalisation of a standard averaging over circles used in complex analysis: see, for example \cite{Littlewood:1926kx} and \cite{Hardy:1932uq}, whose notation is adapted.}

\begin{definition}
  Let \( X \) be a space, \( u \in L^p(X) \), and let \( G\) be a compact group acting on \( X \). The \emph{\((G,p)\)-orbital mean} of \(u\) is the function given by
  \begin{equation}
    M_p(u)(x) = \left( \int_G \abs{u(g.x)}^p \, dg \right)^{1/p},
  \end{equation}
  where \(dg\) is the invariant probability measure on \(G\).
\end{definition}

Obviously \(M_p\) is nonnegative and \(G\)-invariant. We give a brief sketch of some more simple properties of the \(M_p\), to be borne in mind in the sequel.

\begin{lemma}
  \( M_p(u) \) is increasing as a function of \(p\), and moreover, \( M_p(u) \) is also log-convex, in that if \( p \leq q \leq r \), and \( \theta \) is such that
  \begin{equation*}
    \frac{1}{q} = \frac{\theta}{p}+\frac{1-\theta}{r},
  \end{equation*}
 then
  \begin{equation}
    M_q(u) \leq (M_p(u))^{\theta} (M_r(u))^{1-\theta}
  \end{equation}
\end{lemma}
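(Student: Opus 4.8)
The plan is to observe that for each fixed $x \in X$ the orbital mean is simply an ordinary $L^p$ norm over the probability space $(G,dg)$, after which both assertions become classical facts about such norms. Concretely, I would fix $x$ and set $f_x(g) := u(g.x)$, regarded as a function on $(G,dg)$; since $dg$ has total mass $1$, the definition reads
\[
  M_p(u)(x) = \left( \int_G \abs{f_x(g)}^p \, dg \right)^{1/p} = \norm{f_x}_{L^p(G,\,dg)}.
\]
All the inequalities below are then to be read pointwise in $x$, which is legitimate because $x$ is held fixed and $dg$ does not depend on it.

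For monotonicity I would exploit that the measure is a probability measure. Taking $p \leq q$ and applying Jensen's inequality to the convex function $t \mapsto t^{q/p}$ (here $q/p \geq 1$) gives
\[
  \left( \int_G \abs{f_x}^p \, dg \right)^{q/p} \leq \int_G \abs{f_x}^q \, dg,
\]
and raising to the power $1/q$ yields $M_p(u)(x) \leq M_q(u)(x)$. It is the normalisation $\int_G dg = 1$ that makes this work; for a general measure no such monotonicity holds.

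For the log-convexity statement, which is exactly the interpolation inequality for $L^p$ norms, I would split $\abs{f_x}^q = \abs{f_x}^{q\theta} \cdot \abs{f_x}^{q(1-\theta)}$ and apply H\"older's inequality with the conjugate exponents $a = p/(q\theta)$ and $b = r/(q(1-\theta))$. The hypothesis $\tfrac1q = \tfrac{\theta}{p} + \tfrac{1-\theta}{r}$ is precisely what guarantees both $\tfrac1a + \tfrac1b = 1$ and $a,b \geq 1$, so that H\"older applies and produces
\[
  \int_G \abs{f_x}^q \, dg \leq \left( \int_G \abs{f_x}^p \, dg \right)^{q\theta/p} \left( \int_G \abs{f_x}^r \, dg \right)^{q(1-\theta)/r};
\]
taking the $1/q$ power gives $M_q(u)(x) \leq (M_p(u)(x))^{\theta}(M_r(u)(x))^{1-\theta}$. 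There is no serious obstacle here: the whole content is the recognition that $M_p$ is an $L^p$ norm on $(G,dg)$, and the only care needed is to check that the H\"older exponents defined above are admissible, which follows directly from the interpolation relation on $1/q$.
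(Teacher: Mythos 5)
Your proposal is correct and follows essentially the same route as the paper: the log-convexity is obtained by the identical splitting \( \abs{f_x}^q = \abs{f_x}^{q\theta}\abs{f_x}^{q(1-\theta)} \) and H\"older with exponents \( p/(q\theta) \) and \( r/(q(1-\theta)) \). The only cosmetic difference is in the monotonicity step, where you invoke Jensen for \( t \mapsto t^{q/p} \) while the paper applies H\"older against the constant function \(1\); both arguments rest on the same fact, namely that \(dg\) is a probability measure.
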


\begin{proof}
  Both of these results have standard proofs using H\"older's inequality, the additional interest perhaps stemming from the variables remaining present (although we shall elide them for the proof). We write \( u_g = u \circ g \). We then have, using H\"older's inequality,
  \begin{align*}
    M_q(u) &=  \left( \int_G \abs{u_g}^q \right)^{1/q} \\
           &= \left( \int_G \abs{u_g}^{q\theta} \abs{u_g}^{q(1-\theta)} \right)^{1/q} \\
           &\leq \left( \int_G \abs{u_g}^{q\theta (p/\theta q)} \right)^{ (q\theta/p) 1/q} \left( \int_G \abs{u_g}^{q(1-\theta) (r/(1-\theta q))} \right)^{(q(1-\theta)/r) 1/q} \\
           &= \left( \int_G \abs{u_g}^{p} \right)^{\theta/p} \left( \int_G \abs{u_g}^{r} \right)^{(1-\theta)/r} \\
           &=  (M_p(u))^{\theta} (M_r(u))^{1-\theta}.
  \end{align*}
  Likewise, the first part follows from H\"older's inequality: if \(p<q\), then
  \begin{equation*}
    M_p(u) \leq \left( \int_G \abs{u_g}^q \right)^{1/q} \left( \int_G 1^{q/(q-p)} \right)^{(q-p)/pq} = M_q(u). \qedhere
  \end{equation*}
\end{proof}

Equality in each case occurs if and only if \( u_g \) is constant over \(G\), i.e., \(u\) is \(G\)-invariant.

\begin{lemma}[Norm-preservation]
  \begin{equation}
    \norm{u}_p = \norm{M_p(u)}_p.
  \end{equation}
\end{lemma}

\begin{proof}
  We have
  \begin{align*}
    \norm{M_p(u)}_p^p &= \int_{X} \abs{M_p(u)(x)}^p dx \\
                       &= \int_{X} \left( \int_G \abs{u(gx)}^p dg \right)  dx \\
                      &= \int_G \left( \int_X \abs{u(gx)}^p \right) dg \\
                      &= \int_G \norm{u}_p^p dg \\
                       &= \norm{u}_p^p,
  \end{align*}
  where Tonelli's theorem is used for the third equality.
\end{proof}

So far, the orbital mean possesses several of the useful properties of the symmetric decreasing rearrangement. This continues to be the case when we look for equivalents of the inequalities described in the previous section.

\section{The Kinetic Energy}
\label{sec:kinetic-energy}

Here, we would clearly like an analogue of the P\'olya--Szeg\H{o} inequality \eqref{eq:polya-szego}.

\subsection{A Convexity Inequality for the Gradient}
\label{sec:conv-ineq-grad}

We make use of a new, general inequality, which shows that interchanging norms with gradients can only increase the function:

\begin{theorem}
\label{thm:pwconvexgrad}
  Suppose that \(Y\) is a measurable space, \( \Omega \subseteq X \), and for all \(x \in \Omega\), \( f(x,y),\nabla_{x} f(x,y) \in L^{p}(Y) \), and \( \norm{f_x}_p, \norm{\abs{\nabla_x f_x}}_p \in L^{1,P}(\Omega) \). Then \(\norm{f_{x}}_{p} \in W^{1,P}(\Omega) \), and for a.e. \(x \in \Omega\),
  \begin{equation}
    \abs{\nabla_{x} \left( \int_{Y} \abs{f(x,y)}^{p} \, dy \right)^{1/p}} \leq \left(\int_{Y} \abs{\nabla_{x} f(x,y)}^{p} \, dy \right)^{1/p},
  \end{equation}
  i.e., if we write \( f(x,y)=f_{x}(y) \),
  \begin{equation}
    \abs{\nabla_{x} \norm{f_{x}}_{p}} \leq \norm{\abs{\nabla_{x} f_{x}}}_{p},
  \end{equation}
  where norms are taken over \(y\). Equality occurs only if \( \abs{\nabla_x \abs{f(x,y)}^{2-p}} \) (or \( \abs{\nabla_x \log{\abs{f(x,y)}}}  \) if \(p=2\)) is independent of \(y\).
\end{theorem}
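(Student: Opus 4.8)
The plan is to reduce the claim to a pointwise statement in \(x\) and then combine the chain rule with Hölder's and the triangle inequalities. Write \(F(x) := \norm{f_x}_p = \left( \int_Y \abs{f(x,y)}^p \, dy \right)^{1/p}\). Since \(\abs{\nabla_x F(x)} = \max_{\abs{\mathbf{e}}=1} \mathbf{e} \cdot \nabla_x F(x)\), it suffices to bound the directional derivative \(\mathbf{e} \cdot \nabla_x F\) for each fixed unit vector \(\mathbf{e}\). The quickest route to the inequality itself is the reverse triangle inequality for the \(L^p(Y)\) norm: for small \(h\),
\begin{equation*}
  \abs{F(x + h\mathbf{e}) - F(x)} \leq \norm{f_{x+h\mathbf{e}} - f_x}_p = \left( \int_Y \abs{f(x+h\mathbf{e},y) - f(x,y)}^p \, dy \right)^{1/p}.
\end{equation*}
Dividing by \(\abs{h}\) and letting \(h \to 0\), continuity of translation in \(L^p(Y)\) together with dominated convergence identifies the limit of the right-hand side with \(\left( \int_Y \abs{\mathbf{e} \cdot \nabla_x f}^p \, dy \right)^{1/p}\), which is at most \(\left( \int_Y \abs{\nabla_x f}^p \, dy \right)^{1/p}\); maximising over \(\mathbf{e}\) yields the theorem.

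For an approach that also exposes the equality case, I would instead differentiate under the integral sign. The \( X \to \complexes \to \complexes \) chain rule gives \(\nabla_x \abs{f}^p = p \abs{f}^{p-1} \nabla_x \abs{f}\), whence
\begin{equation*}
  \nabla_x F = F^{1-p} \int_Y \abs{f}^{p-1} \nabla_x \abs{f} \, dy.
\end{equation*}
Applying the triangle inequality to this vector-valued integral, then Hölder's inequality with exponents \(p/(p-1)\) and \(p\) to split \(\abs{f}^{p-1} \cdot \abs{\nabla_x \abs{f}}\), the factor \(\left( \int_Y \abs{f}^p \right)^{(p-1)/p} = F^{p-1}\) cancels the prefactor, leaving
\begin{equation*}
  \abs{\nabla_x F} \leq \left( \int_Y \abs{\nabla_x \abs{f}}^p \, dy \right)^{1/p} \leq \left( \int_Y \abs{\nabla_x f}^p \, dy \right)^{1/p},
\end{equation*}
the last step being the diamagnetic (Kato) inequality \(\abs{\nabla_x \abs{f}} \leq \abs{\nabla_x f}\).

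The main obstacle is making this formal differentiation rigorous in the weak setting and thereby establishing the asserted Sobolev membership of \(F\). I would first prove the difference-quotient inequality above, which uses only the hypotheses on \(f_x\) and \(\nabla_x f_x\), to obtain a uniform Sobolev bound and hence weak differentiability of \(F\); the pointwise formula for \(\nabla_x F\) then follows on a full-measure set. A genuine subtlety for \(1 \leq p < 2\) is the singular factor \(\abs{f}^{p-2}\) implicit in \(\nabla_x \abs{f}\) near the zero set of \(f\): I would handle this by working with \((\abs{f}^2 + \varepsilon)^{1/2}\) in place of \(\abs{f}\), deriving the estimate uniformly in \(\varepsilon\), and passing to the limit, using that \(\nabla_x F\) and \(\nabla_x \abs{f}\) vanish almost everywhere on the level sets where \(F = 0\) or \(\abs{f} = 0\).

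Finally, the equality analysis comes from demanding that the triangle and Hölder steps be simultaneously sharp. Equality in the triangle step forces the vectors \(\nabla_x \abs{f(x,y)}\) to share a common direction independent of \(y\), while equality in Hölder forces \(\abs{\nabla_x \abs{f}}\) to be a fixed (in \(y\)) multiple of a power of \(\abs{f}\); together these yield the stated \(y\)-independence condition, reducing to \(\abs{\nabla_x \log \abs{f}}\) when \(p = 2\) and to \(\abs{\nabla_x \abs{f}^{2-p}}\) in general. Verifying that these two conditions combine to \emph{exactly} the stated necessary condition, and checking its interaction with the diamagnetic step for complex-valued \(f\), is the most delicate bookkeeping in the argument.
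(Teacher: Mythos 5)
Your second argument is essentially the paper's own proof: differentiate the \(p\)-norm via the chain rule to get \(\nabla_x \norm{f_x}_p = \norm{f_x}_p^{1-p}\int_Y \abs{f_x}^{p-2}\Re(\bar f_x \nabla_x f_x)\), bound the integrand by \(\abs{f_x}^{p-1}\abs{\nabla_x f_x}\) (your triangle-plus-diamagnetic steps combined), apply H\"older with conjugate exponents so the \(\norm{f_x}_p^{p-1}\) factor cancels, and read off the equality case from H\"older. The difference-quotient argument and the \(\varepsilon\)-regularisation for \(p<2\) are extras the paper does not carry out, but the core route and the equality analysis coincide with the paper's.
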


\begin{proof}
  From the argument in \cite{lieb2001analysis}, Theorem 6.17, we have
  \begin{equation*}
    (\nabla_x \abs{f})(x,y) =
    \begin{cases}
      \Re\left( \frac{\bar{f}}{\abs{f}} \nabla_x f \right)(x,y) & f \neq 0 \\
      0 & f=0
    \end{cases}.
  \end{equation*}
  It then follows that
  \begin{align*}
    \abs{\nabla_{x} \norm{f_{x}}_{p}} &= \abs{ \norm{f_{x}}_{p}^{1-p} \int \abs{f_{x}}^{p-2} \Re(\bar{f}_{x} \nabla_{x} f_{x}) } \\
                                      &\leq \norm{f_{x}}_{p}^{1-p} \int \abs{f_{x}}^{p-1} \abs{\nabla_{x}f_{x}} \\
                                      &\leq \norm{f_{x}}_{p}^{1-p} \norm{\abs{f_{x}}^{p-1}}_{q} \norm{\abs{\nabla_{x}f_{x}}}_{p} \\
                                      &= \norm{f_{x}}_{p}^{1-p} \norm{f_{x}}_{p}^{p-1} \norm{\abs{\nabla_{x}f_{x}}}_{p} \\
                                      &= \norm{\abs{\nabla_{x}f_{x}}}_{p},
  \end{align*}
  using the chain rule to find the derivative of the \(p\)-norm, applying H\"older's inequality and using that \(1/p+1/q=1\). The equality condition follows from that for H\"older's inequality.
\end{proof}

\begin{remark}
  Exactly the same proof will produce a generalisation of the \emph{diamagnetic inequality},
  \begin{equation}
    \label{eq:29}
    \abs{\nabla\abs{f(x)}} \leq \abs{(\nabla+iA)f(x)},
  \end{equation}
  for any \( f \) in the magnetic space \( H^1_A(\Omega) \), namely
  \begin{equation}
    \label{eq:30}
    \abs{\nabla \norm{f}_{p}} \leq \norm{\abs{(\nabla_x+iA)f_x}}_{p}
  \end{equation}
\end{remark}

In the result below we write for brevity \( \norm{f(x,y)}_{p[2]} = \left( \int_Y \norm{f(x,y)}^p \, dy \right)^{1/p} \) and similarly for \(x\).

\begin{corollary}
  Suppose \( \norm{f(x,y)}_{p[2]} \) exists for almost all \(x\), and \( \norm{\nabla_{x} f(x,y)}_{p[2]} \in L^{P}(\Omega,dx) \). Then \( \abs{\nabla_{x} \norm{f(x,y)}_{p[2]}}^{p} \in L^{P}(\Omega,dx) \), and
  \begin{equation}
    \label{eq:LLgencongrad}
    \left(\int_{\Omega} \abs{\nabla_{x} \left( \int_{Y} \abs{f(x,y)}^{p} \, dy \right)^{1/p} }^{P} \, dx \right)^{1/P} \leq \left( \int_{\Omega} \left( \int_{Y} \abs{\nabla_{x} f(x,y)}^{p} \, dy \right)^{P/p} \, dx \right)^{1/P},
  \end{equation}
  or
  \begin{equation}
    \norm{\abs{\nabla_{x}\norm{f(x,y)}_{p[2]}}}_{P} \leq \norm{\norm{\abs{\nabla_{x} f(x,y)}}_{p[2]}}_{P}.
  \end{equation}
\end{corollary}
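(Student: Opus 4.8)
The plan is to treat this corollary as nothing more than the $L^P(\Omega)$-integrated form of the pointwise gradient inequality already established in Theorem~\ref{thm:pwconvexgrad}. Since that theorem supplies the entire analytic content, the only work here is to apply it fibrewise in $x$ and then integrate in $x$, checking that the hypotheses transfer and that the relevant functions of $x$ are measurable. No new inequality is needed.

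First I would verify that, under the stated assumptions, the hypotheses of Theorem~\ref{thm:pwconvexgrad} hold for almost every $x \in \Omega$: the existence of $\norm{f(x,y)}_{p[2]}$ for a.e.\ $x$ gives $f_x \in L^p(Y)$, while $\norm{\nabla_x f(x,y)}_{p[2]} \in L^P(\Omega,dx)$ forces $\abs{\nabla_x f_x} \in L^p(Y)$ for a.e.\ $x$, as an $L^P(\Omega)$ function is finite a.e. For such $x$ the theorem yields the pointwise bound
\begin{equation*}
  \abs{\nabla_x \norm{f_x}_p} \leq \norm{\abs{\nabla_x f_x}}_p = \left( \int_Y \abs{\nabla_x f(x,y)}^p \, dy \right)^{1/p}.
\end{equation*}
Both sides are nonnegative, so raising to the power $P$ preserves the inequality, and integrating over $\Omega$ by monotonicity of the integral gives
\begin{equation*}
  \int_\Omega \abs{\nabla_x \norm{f_x}_p}^P \, dx \leq \int_\Omega \left( \int_Y \abs{\nabla_x f(x,y)}^p \, dy \right)^{P/p} \, dx .
\end{equation*}
The right-hand side is finite precisely because $\norm{\nabla_x f(x,y)}_{p[2]} \in L^P(\Omega,dx)$, which simultaneously delivers the claimed membership and shows the left-hand side is finite. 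Taking $P$-th roots of both sides, monotone on $[0,\infty)$, yields \eqref{eq:LLgencongrad}.

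The only genuine obstacle is measurability bookkeeping, which I would dispatch in the style of the norm-preservation lemma: one needs $x \mapsto \norm{f_x}_p$ and $x \mapsto \norm{\abs{\nabla_x f_x}}_p$ to be measurable, so that the outer integrals are meaningful and the weak gradient $\nabla_x \norm{f_x}_p$ is a legitimate object. The Sobolev membership $\norm{f_x}_p \in W^{1,P}(\Omega)$, needed so that $\nabla_x \norm{f_x}_p$ makes sense globally rather than merely fibrewise, is already part of the conclusion of Theorem~\ref{thm:pwconvexgrad}, and combined with the finiteness just established it closes the argument. Thus the corollary is a direct ``integrate the pointwise estimate'' consequence, and I would expect the write-up to be short.
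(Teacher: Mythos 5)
Your proposal is correct and is exactly the paper's argument: the paper's entire proof reads ``Take the \(P\)-norm of the previous result,'' i.e.\ apply \autoref{thm:pwconvexgrad} pointwise in \(x\) and integrate. Your additional checks that the hypotheses transfer fibrewise and that the right-hand side's finiteness gives the claimed \(L^P\) membership are sensible elaborations of the same route.
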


\begin{proof}
  Take the \(P\)-norm of the previous result.
\end{proof}

\begin{remark}
  This is a considerable generalisation of the inequality
  \begin{equation*}
    \int_{\reals^{\Dim}} \abs{\nabla\sqrt{f^{2}+g^{2}}}^2 (x) \, dx \leq \int_{\reals^{\Dim}} \left( \abs{\nabla f}^{2}(x) + \abs{\nabla g}^{2}(x) \right) \, dx ,
  \end{equation*}
  Theorem 7.8 of \cite{lieb2001analysis}, p.~177, which follows from \eqref{eq:LLgencongrad} by taking \(Y\) to be a two-point space, \(\Omega=\reals^{\Dim}\), and \(p=P=2\).
\end{remark}

In the case of most interest to us, \(p=2\), we can give a more explicit proof of \autoref{thm:pwconvexgrad}: in a similar way to the proof of \cite{lieb2001analysis}, Theorem 7.8, we can use the identity
\begin{align*}
  &\frac{1}{4}\iint \left(\abs{\bar{u}(y)v(z) - \bar{u}(z) v(y)  }^2 + \abs{\bar{u}(y)v(z) - u(z) \bar{v}(y)  }^2 \right) dy \, dz \\
&= \norm{u}_2^2 \norm{v}_2^2 - \iint \Re( \bar{u} v )(y) \cdot \Re( \bar{u} v )(z) \, dy \, dz \\
&= \norm{u}_2^2 \norm{v}_2^2 - \abs{ \int  \Re( \bar{u} v )(y) \, dy }^2,
\end{align*}
with \(u=f_x\) and \( v=\nabla_x f_x \), to obtain
\begin{align*}
  &\frac{1}{4}\iint \left(\abs{\bar{f}_x(y)\nabla_x f_x(z) - \bar{f}_x(z) \nabla_x f_x(y)  }^2 + \abs{\bar{f}_x(y)\nabla_x f_x(z) - f_x(z) \nabla_x \bar{f}_x(y)  }^2 \right) dy \, dz \\
  &=\norm{f_x}_2^2 \norm{\abs{\nabla_x f_x}}_2^2 - \abs{ \int  \Re( \bar{f}_x \nabla_x f_x )(y) \, dy }^2 \\
&= \norm{f_x}_2^2 \norm{\abs{\nabla_x f_x}}_2^2 - \norm{f_x}_2^2 \abs{\nabla_x \norm{f_x}_2}^2.
\end{align*}
The left-hand side is clearly nonnegative, and the result follows. We also see that for equality, we need \( \bar{f}_x(y)\nabla_x f_x(z)  \) to be real and equal to \( f_x(z) \nabla_x \bar{f}_x(y) \) a.e.. Assuming that \(f\) is never zero, we find by a similar method to the latter part of the proof in  \cite{lieb2001analysis} that if equality holds, we can factorise \(f\) as \( f(x,y) = k(y)F(x) \), where \( F(x) \) is real and positive and \(k \neq 0\).: observe that
\begin{equation*}
  (f_x(z))^2\nabla_x(\bar{f}_x(y)/f_x(z)) = f_x(z) \nabla_x \bar{f}_x(y) - \bar{f}_x(y)\nabla_x f_x(z),
\end{equation*}
and writing \(f\) in polar form, \(f(x,y) = e^{r_x(y)+i\theta_x(y)}\), we see that for the left-hand side to vanish, neither \( r_x(y)-r_x(z) \) nor \( \theta_x(y)+\theta_x(z) \) can depend on \(x\) for a.e. \(y,z\). Hence we can write \( f(x,y)/f(x,z) = K(y,z) \) for some nonvanishing \(K\); choosing a specific \(x\), we see that we must be able to write \(K(y,z)=k(z)/k(y)\) for some nonvanishing function \(k\), and then \( f(x,y) = k(y) F(x) \) for some positive function \(F\).

If \(Y=G\) is a group and \( f(x,g)\) is defined by a group action, \( f(x,g) = u(g.x) \), then we can show that \(k(g)\) is a character for the group \(G\): we have \( f(x,g) = u(g.x) = f(g.x,e) \). We may choose \(k(e)=1\) since the scaling of \(k\) and \(F\) may be chosen freely. Then
\begin{equation*}
  F(g.x) = u(g.x) = f(g.x,e) = f(x,g) = k(g) F(x),
\end{equation*}
and then
\begin{equation*}
  k(gg')F(x) = F(gg'.x) = F(gg'.x) = F(g.(g'.x)) = k(g)F(g'.x) = k(g)k(g') F(x),
\end{equation*}
and since \(F \neq 0\), \( k(gg') = k(g)k(g') \) for all \(g,g' \in G\).

If we also impose \( \norm{u_g}_2 = \norm{u}_2 \), we must have \( \abs{k(g)}=1 \), so \(k\) is a \emph{unitary} character.

\subsection{The Orbital Mean and the Kinetic Energy}
\label{sec:symm-kinet-energy}

The convexity inequality leads us almost directly to the inequality we require:

\begin{corollary}[Symmetrisation of the kinetic energy]
  \label{thm:symkinen}
  Let \(u \in H^{1}(\reals^{\Dim}) \). Then
  \begin{equation}
    \norm{\nabla M_2(u)}_{2} \leq \norm{\nabla u}_{2},
  \end{equation}
  with equality if and only if \(u\) is equivalent to a nonnegative radial function.
\end{corollary}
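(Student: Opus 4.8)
The plan is to apply Theorem~\ref{thm:pwconvexgrad} with the measure space $Y$ taken to be the group $G$, the auxiliary variable $y$ playing the role of the group element $g$, and $f(x,g) = u(g.x)$. With these identifications, the inner $L^p$-norm over $Y$ with $p=2$ becomes precisely the orbital mean: $\norm{f_x}_2 = \left(\int_G \abs{u(g.x)}^2\, dg\right)^{1/2} = M_2(u)(x)$. Theorem~\ref{thm:pwconvexgrad} then furnishes, for a.e.\ $x$, the pointwise bound
\begin{equation*}
  \abs{\nabla_x M_2(u)(x)} = \abs{\nabla_x \norm{f_x}_2} \leq \norm{\abs{\nabla_x f_x}}_2 = \left(\int_G \abs{\nabla_x u(g.x)}^2\, dg\right)^{1/2}.
\end{equation*}
The first step is therefore just to verify the hypotheses of the theorem hold for this choice, namely that $u \in H^1(\reals^{\Dim})$ guarantees $f_x, \nabla_x f_x \in L^2(G)$ for a.e.\ $x$ and the requisite integrability over $\Omega = \reals^{\Dim}$; this is where the $G$-invariance of $dx$ and the probability normalisation of $dg$ enter.

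Next I would integrate the squared pointwise inequality over $x \in \reals^{\Dim}$ and then apply the norm-preservation Lemma. Squaring and integrating gives
\begin{equation*}
  \norm{\nabla M_2(u)}_2^2 \leq \int_{\reals^{\Dim}} \int_G \abs{\nabla_x u(g.x)}^2 \, dg \, dx.
\end{equation*}
By Tonelli's theorem the right-hand side equals $\int_G \left(\int_{\reals^{\Dim}} \abs{\nabla_x u(g.x)}^2\, dx\right) dg$; here one uses that $\nabla$ respects the $G$-action (so that $\nabla_x u(g.x)$ is, up to the orthogonal transformation $g \in SO(\Dim)$, the gradient of $u$ evaluated at $g.x$) together with the $G$-invariance of $dx$ to conclude each inner integral is $\norm{\nabla u}_2^2$. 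Since $dg$ is a probability measure, the whole expression collapses to $\norm{\nabla u}_2^2$, giving the claimed inequality. This is essentially the same mechanism as the norm-preservation Lemma, applied now to the gradient rather than the function.

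For the equality case, I would trace back through Theorem~\ref{thm:pwconvexgrad}, whose equality condition (for $p=2$) requires $\abs{\nabla_x \log\abs{f(x,g)}}$ to be independent of $g$ for a.e.\ $x$. The discussion following Theorem~\ref{thm:pwconvexgrad} already carries this analysis through in the group setting: it shows that equality forces the factorisation $f(x,g) = k(g)F(x)$ with $F$ real and positive, and that when $f(x,g) = u(g.x)$ the function $k$ must be a unitary character of $G$. The remaining task, and the main obstacle, is to rule out nontrivial characters when $G = SO(\Dim)$. The key point is that $SO(\Dim)$ is connected and (for $\Dim \geq 2$) has no nontrivial one-dimensional unitary representations, so the only continuous character is $k \equiv 1$; this forces $u(g.x) = F(x)$, i.e.\ $u$ is radial, and positivity of $F$ makes it nonnegative. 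I would note that the equality clause is cleanest precisely in the stated radial case, and flag that for a general compact $G$ with nontrivial characters one obtains only the weaker conclusion that $u$ agrees with a $G$-invariant function up to such a character.
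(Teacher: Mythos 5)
Your derivation of the inequality itself is sound but takes a genuinely different route from the paper. You apply \autoref{thm:pwconvexgrad} with \(\Omega = \reals^{\Dim}\), \(Y = G = SO(\Dim)\) and the \emph{full} gradient \(\nabla_x\), then square, integrate, and use the orthogonality of \(g\) (so that \(\abs{\nabla_x u(g.x)} = \abs{(\nabla u)(g.x)}\)) together with Tonelli and the invariance of Lebesgue measure to collapse the right-hand side to \(\norm{\nabla u}_2^2\). The paper instead passes to polar coordinates: it takes \(\Omega = \reals^+\) with measure \(r^{\Dim-1}\,dr\), sets \(f(r,g) = u(rg.n)\), and applies the theorem only to the \emph{radial} derivative, so the right-hand side becomes \(\norm{\nabla_r u}_2\), which is then bounded by \(\norm{\nabla u}_2\) because the angular part of the gradient is discarded. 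Your version avoids the polar decomposition entirely, at the price of concentrating the whole equality analysis in the pointwise theorem; the paper's version splits the equality condition into two pieces (pointwise equality in the radial-derivative inequality, plus vanishing of the angular gradient, which directly forces radiality). Both are legitimate, and your hypothesis-checking for the theorem (a.e.\ membership of \(f_x, \nabla_x f_x\) in \(L^2(G)\) via Tonelli) is correct.

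The equality case is where your argument contains a concrete error. You assert that \(SO(\Dim)\) has no nontrivial one-dimensional unitary representations for \(\Dim \geq 2\); this is false for \(\Dim = 2\), since \(SO(2) \cong U(1)\) is abelian and carries the characters \(\theta \mapsto e^{in\theta}\), \(n \in \integers\). Connectedness alone never excludes characters (consider the torus); what you actually need is perfectness of \(SO(\Dim)\), which holds only for \(\Dim \geq 3\). Fortunately no representation theory is needed: in the factorisation \(f(x,g) = k(g)F(x)\) with the normalisation \(k(e)=1\), taking \(g=e\) gives \(u = F\), which is real and positive, whence \(k(g) = u(g.x)/u(x) > 0\); combined with \(\abs{k(g)} = 1\) this forces \(k \equiv 1\) for \emph{any} compact \(G\). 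So your closing caveat about general compact groups with nontrivial characters is also unnecessary --- the positivity of \(F\), which the factorisation already supplies, kills the character. Replace the representation-theoretic step with this observation (or restrict that step to \(\Dim \geq 3\) and treat \(\Dim = 2\) separately, e.g.\ by noting that a nontrivial angular phase contributes strictly positive angular kinetic energy).
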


\begin{proof}
  Taking \(p=P=2\), \( x=r \), \(y=g\), \(f(x,y)=u(x y.n)\) with \(n\) a fixed unit vector, \((\Omega,dx)=(\reals^{+},r^{\Dim-1} dr)\), \( (Y,dy)=(O(\Dim), S_{\Dim-1} dg) \) in the inequality \eqref{eq:LLgencongrad}, the right-hand side becomes
  \begin{equation*}
    \left( \int_{0}^{\infty} S_{\Dim-1}\int_{O(\Dim)} \abs{\nabla_{r} u(rg.n)}^{2} \, dg \, dx \right)^{1/2} = \norm{\nabla_r u}_2 \leq \norm{\nabla u}_{2};
  \end{equation*}
  by Tonelli's theorem, with equality only if \(u\) is equivalent to a nonnegative radial function (and hence \(u(x)=M_2(u)(x)\) a.e.). For the left-hand side, we have
  \begin{align*}
    \left(\int_{0}^{\infty} \abs{\nabla_{r} \left( S_{\Dim-1}\int_{O(\Dim)} \abs{u(rg.n)}^{2} dg \right)^{1/2} }^{2} r^{\Dim-1} \, dr \right)^{1/2} &= \left(\int_{0}^{\infty} S_{\Dim-1} \abs{\nabla_{r} M_2(u)(rn) }^{2} r^{\Dim-1} \, dr \right)^{1/2} \\
                                                                                                                                                   &= \left(\int_{0}^{\infty} S_{\Dim-1} \abs{\nabla M_2(u)(rn) }^{2} r^{\Dim-1} \, dr \right)^{1/2} \\
                                                                                                                                        &= \norm{\nabla M_2(u)}_{2},
	\end{align*}
	since \(M_2(u)(rn)\) is independent of \(n\).
\end{proof}

\begin{remark}
  The nature of this proof required that we were able to essentially decompose \(\reals^{\Dim}\) into subspaces perpendicular and parallel to the action of \(SO(\Dim)\), i.e.
  \begin{equation*}
    \reals^{\Dim} \cong \reals^+ \times S^{\Dim-1} = \reals^+ \times \orb_G(n),
  \end{equation*}
  where in the second expression, \(SO(\Dim)\) acts only on the second factor, and in the last, \(n \in S^{\Dim-1}\) is a fixed vector. The same principle applies to the third example we gave in the second section, in that \(G\) acts only on \(y\); hence, the same proof will apply to any space which may be decomposed as \( X \cong Y \times Z \) with \(G\) acting only on the factor \(Z\), and \( Z=\orb_G(z) \) for some \(z \in Z\). We then obtain a corresponding generalisation of the results below that require that \(T\) actually decreases, but we have omitted this from the theorem statements for the sake of simplicity. A better understanding of the equality cases in \autoref{thm:pwconvexgrad} may allow this to be relaxed.
\end{remark}

\section{The Potential Energy}
\label{sec:potential-energy}

The results for the potential energy are obtained quite differently. The external potential term is straightforward, but introduces an important idea:

\begin{result}
  If \( V \) and \(dx\) are \(G\)-invariant, then
  \begin{equation}
    \int_{X} V(x) \abs{u(x)}^2 = \int_{X} V(x) (M_2(u)(x))^2 \, dx.
  \end{equation}
\end{result}

\begin{proof}
  Changing variables, we have
  \begin{equation*}
    \int_{X} V(x) \abs{u(x)}^2 dx = \int_X V(g.x) \abs{u(g.x)}^2 dx = \int_X V(x) \abs{u(g.x)}^2 dx,
  \end{equation*}
  since \(V\) is \(G\)-invariant. Integrating over \(G\), the left-hand side remains the same, but on the right we have
  \begin{align*}
    \int_G \int_{X} V(x) \abs{u(x)}^2 dx \, dg
    &= \int_G \int_X V(x) \abs{u(g.x)}^2 dx \, dg \\
    &= \int_X V(x) \int_G \abs{u(g.x)}^2 dg \, dx \\
    &= \int_X V(x)  (M_2(u)(x))^2 \, dx,
  \end{align*}
  as required.
\end{proof}

The key idea here is that the invariance of the measure \(dx\), which enables us to change variables and interchange the order of integration; this will be essential to the argument for the other inequalities.

For now, however, we shall tack, and look at the self-potential term, which requires a different type of rearrangement. We first need a result for convolutions that, phrased properly, looks simple:

\begin{lemma}
  If \(G\) acts linearly on \( X\), \( dx\) is \(G\)-invariant and \( U,h \) are \(G\)-invariant then
  \begin{equation*}
    W(y) = (U * h)(y) = \int_X U(x) h(x-y) \, dx
  \end{equation*}
  is also \( G \)-invariant.
\end{lemma}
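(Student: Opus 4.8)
The plan is to verify directly that \(W(g.y)=W(y)\) for every \(g \in G\), using exactly the same change-of-variables-and-interchange idea that powered the external potential \emph{Result} just above. The only new structural ingredient is the \emph{linearity} of the action, which is what converts a difference of shifted points into a shift of a difference.

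First I would write out the quantity to be controlled,
\begin{equation*}
  W(g.y) = \int_X U(x)\, h(x - g.y) \, dx,
\end{equation*}
and then substitute \(x = g.x'\). Because \(dx\) is \(G\)-invariant, this substitution leaves the measure unchanged, giving
\begin{equation*}
  W(g.y) = \int_X U(g.x')\, h(g.x' - g.y) \, dx'.
\end{equation*}
Now I would apply the three invariance hypotheses in turn: \(G\)-invariance of \(U\) replaces \(U(g.x')\) by \(U(x')\); linearity of the action rewrites the argument of \(h\) as \(g.x' - g.y = g.(x' - y)\); and \(G\)-invariance of \(h\) then replaces \(h(g.(x'-y))\) by \(h(x'-y)\). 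This yields
\begin{equation*}
  W(g.y) = \int_X U(x')\, h(x' - y) \, dx' = W(y),
\end{equation*}
which is precisely the desired \(G\)-invariance.

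This argument is essentially mechanical, so there is no serious obstacle; the one point that genuinely requires the hypotheses as stated is the step \(g.x' - g.y = g.(x'-y)\), which fails without linearity (an affine or more general action would break the reduction to a shifted difference). I would also note in passing that the invariance of \(dx\) under \(x \mapsto g.x\) is the exact analogue of the change of variables used in the external potential \emph{Result}, and that one should assume \(U,h\) are such that the convolution integral converges absolutely so that the substitution and the manipulations are legitimate; beyond that, no estimates or interchange-of-integration machinery (such as Tonelli) are needed here, since everything happens under a single integral sign.
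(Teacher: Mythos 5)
Your proof is correct and is essentially identical to the paper's: the same substitution \(x = g.x'\) under the \(G\)-invariant measure, followed by the invariance of \(U\), linearity of the action to write \(g.x'-g.y = g.(x'-y)\), and the invariance of \(h\). Your remarks on where linearity is needed and on absolute convergence are sensible glosses but do not change the argument.
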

\begin{proof}
  We have
  \begin{align*}
    W(g.y) &= \int_X U(x) h(x-g.y) \, dx \\
          &= \int_X U(g.x') h(g.x'-g.y) \, dx' \\
          &= \int_X U(x') h(x'-y) \, dx' \\
          &= W(y),
  \end{align*}
  applying the substitution \( x'=g.x \) in the second equality.
\end{proof}

We say \(h\) is \emph{positive-definite} if
\begin{equation}
  \label{eq:+vedef}
  \int_X \int_X u(x) h(x-y) u(y) \, dx \, dy > 0
\end{equation}
for any nonzero \(u\) so that the integral converges.

\begin{theorem}[Linear symmetraveraging of convolution, I]
  \label{thm:linearconvI}
  Let \( h\) be a \(G\)-invariant, positive-definite function, with the integral in \eqref{eq:+vedef} convergent on a complete linear space of integrable functions \( \mathcal{I} \). Then if \( u \in \mathcal{I} \), we have
  \begin{equation}
    \label{eq:6}
    \int_X \int_X u(x) h(x-y) u(y) \, dx \, dy \geq \int_X \int_X U(x) h(x-y) U(y) \, dx \, dy,
  \end{equation}
  where \( U(x) = \int_G u(g.x) \, dg \in \mathcal{I} \) (i.e. \(M_1(u)\) without the absolute value); equality holds if and only if \( u\) is \(G\)-invariant a.e..
\end{theorem}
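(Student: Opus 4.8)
The plan is to read the double integral as a positive semi-definite quadratic form and to recognise the passage $u \mapsto U$ as the orthogonal projection onto the $G$-invariant functions for the inner product that this form defines; the inequality then becomes the statement that projection does not increase a squared length, and the equality case is exactly the vanishing of the projected-off part.

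Concretely, write
\begin{equation*}
  B(u,v) = \int_X \int_X u(x) h(x-y) v(y) \, dx \, dy, \qquad Q(u) = B(u,u),
\end{equation*}
so that the left- and right-hand sides of \eqref{eq:6} are $Q(u)$ and $Q(U)$. Since $h$ is positive-definite, $Q$ is positive semi-definite on $\mathcal{I}$, and polarisation $B(u,v)=\tfrac14\big(Q(u+v)-Q(u-v)\big)$ exhibits $B$ as a finite symmetric bilinear form on the linear space $\mathcal{I}$, obeying the Cauchy--Schwarz bound $\abs{B(u,v)} \leq \sqrt{Q(u)Q(v)}$; this finiteness is what lets me manipulate the integrals freely below.

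The first key step is the $G$-invariance of the form. Writing $u_g(x)=u(g.x)$, the change of variables $x\mapsto g.x$, $y\mapsto g.y$ used in the preceding lemma (valid because $G$ acts linearly, $dx$ is $G$-invariant, and $h(g.(x-y))=h(x-y)$) gives $Q(u_g)=Q(u)$, and hence, by polarisation, $B(u_g,v_g)=B(u,v)$. In particular, for any $G$-invariant $w$ we have $w_g=w$, so $B(u_g,w)=B(u_g,w_g)=B(u,w)$ for every $g$. Averaging this identity over $G$ and pulling the Haar integral through the (finite) bilinear form yields
\begin{equation*}
  B(U,w) = B\Big(\int_G u_g \, dg,\, w\Big) = \int_G B(u_g,w)\,dg = \int_G B(u,w)\,dg = B(u,w),
\end{equation*}
for every $G$-invariant $w$, since $dg$ is a probability measure. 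Because $U$ is itself $G$-invariant (by the same invariance of $dg$ that makes each $M_p(u)$ $G$-invariant), I may take $w=U$ to obtain the projection identity $Q(U)=B(U,U)=B(u,U)$.

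From here the inequality is immediate:
\begin{equation*}
  0 \leq Q(u-U) = Q(u) - 2B(u,U) + Q(U) = Q(u) - 2Q(U) + Q(U) = Q(u) - Q(U),
\end{equation*}
which is precisely \eqref{eq:6}. For the equality case, $Q(u)=Q(U)$ forces $Q(u-U)=0$; strict positive-definiteness of $h$ then gives $u-U=0$, i.e. $u=U$ almost everywhere, which says exactly that $u$ is $G$-invariant a.e. (conversely, if $u$ is $G$-invariant then $u_g=u$ and $U=u$). I expect the only real obstacle to be the interchange of the group average with the double integral --- equivalently, the passage $B(\int_G u_g\,dg,w)=\int_G B(u_g,w)\,dg$. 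Since $h$ need not have a definite sign, this is not a free Tonelli interchange of nonnegative integrands; it is justified instead by the hypotheses that $\mathcal{I}$ is complete with $U\in\mathcal{I}$ and that $B$ is finite (via polarisation and Cauchy--Schwarz), so that all the iterated integrals converge absolutely and $dg$ is finite. Everything else is linear algebra in a semi-inner-product space. One could alternatively deduce $Q(u)\geq Q(U)$ in one line from Jensen's inequality applied to the convex functional $Q$ and the probability measure $dg$, using $\int_G Q(u_g)\,dg = Q(u)$; the projection computation above is essentially the same fact, but it makes the equality case transparent.
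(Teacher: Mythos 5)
Your proof is correct and follows essentially the same route as the paper's: both decompose $u = U + (u-U)$, show the cross term vanishes by exploiting the $G$-invariance of $h$, $dx$ and the averaging identity $\int_G (u_g - U_g)\,dg = 0$, and deduce the inequality and equality case from positive-definiteness applied to $u-U$. Your orthogonal-projection packaging (and the explicit remark about justifying the interchange of the group average with the sign-indefinite double integral) is a slightly cleaner framing of the same argument, not a different one.
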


\begin{proof}
  We have the equality \( u = U + (u-U) \). If we write \( H(u,v) \) for the left-hand side of the inequality in the theorem, we have
  \begin{align}
    \label{eq:linearconvdecomp}
    H(u,v) &= H\big(U+(u-U),U+(u-U)\big) \nonumber\\
           &= H(U,U) + 2H(U,u-U) + H(u-U,u-U).
  \end{align}
  The first term is the right-hand side of the inequality. The last term is nonnegative by the positive-definiteness of \(h\), and in particular is zero unless \(u=U\) a.e., i.e. \(u\) is \(G\)-invariant a.e..
  
  It remains to deal with the middle term. The previous lemma shows that \( V = U * h \) is \(G\)-invariant, so setting \(y=g.y'\),
  \begin{align*}
    \int_X W(y) (u(y)-U(y)) \, dy &= \int_X \left(\int_G W(g^{-1}.y) \, dg \right) \big( u(y)-U(y) \big)  \, dy \\
                                  &= \int_G \left( \int_X W(g^{-1}.y) \big( u(y)-U(y) \big) \, dy \right) dg \\
                                  &= \int_G \left( \int_X W(y') \big( u(g.y')-U(g.y') \big) \, dy' \right) dg \\
                                  &= \int_X W(y') \left( \int_G \big( u(g.y')-U(g.y') \big) \, dg \right) dy' \\
                                  &= \int_X W(y') \left( \int_G u(g.y') \, dg -U(y') \right) dy' \\
                                  &= 0
  \end{align*}
  by the definition of \(U\). The result follows.
\end{proof}

\begin{remark}
  We can generalise this theorem to \(h=h(x,y)=h(g.x,g.y)\) in exactly the same way; this shall prove useful in applications. Likewise, the proof also applies if we replace \( u \) by \( u+\rho \), where \(\rho\) is any \(G\)-invariant function with finite integrals against \(h\).
  
  Additionally, we note that positive-definiteness is equivalent to being the Fourier transform of a positive measure, although this does not determine the space on which the function is positive-definite directly.
\end{remark}

\section{The Ground State is Symmetric}
\label{sec:ground-state-symm}

In this section, we collect the previous results together to prove

\begin{theorem}[Ground States of Spherically Symmetric Potentials in \(\reals^{\Dim}\) are Symmetric]
  Let \( X=\reals^{\Dim} \), \( G=SO(\Dim) \) be as in \S~\ref{sec:definitions}, \( V(g.x) = V(x) \) for every \(g \in G\), and
  \begin{equation}
    E[u] = T[u] + P[\abs{u}^2] + b Q[\abs{u}^2]
  \end{equation}
with \(b \geq 0 \). Then if \(u\) is a (constrained or unconstrained) ground state of \(E\), it must satisfy \( u(g.x)=u(x) \) for every \(g \in G\).
\end{theorem}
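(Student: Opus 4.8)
The plan is to show that replacing a ground state $u$ by its orbital mean $M_2(u)$ cannot increase the energy, and then to extract symmetry from the equality case of the kinetic term. First I would check that $M_2(u)$ is an admissible competitor. The Norm-preservation Lemma gives $N[M_2(u)] = \norm{M_2(u)}_2^2 = \norm{u}_2^2 = N[u]$, so $M_2(u)$ lies in the constrained set $\cY$ whenever $u$ does (and there is nothing to check in the unconstrained case); the kinetic bound below shows $\nabla M_2(u) \in L^2$, so $M_2(u) \in H^1(\reals^{\Dim})$. Since $u$ is a ground state, the three integrals making up $E[u]$ are finite, which is what legitimises the hypotheses of the results invoked below.

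Next I would assemble the three term-by-term comparisons, all already proved in the excerpt. For the kinetic term, the symmetrisation corollary (\ref{thm:symkinen}) gives
\[
  T[M_2(u)] = \norm{\nabla M_2(u)}_2^2 \leq \norm{\nabla u}_2^2 = T[u],
\]
with equality if and only if $u$ is equivalent to a nonnegative radial function. For the external potential, the $G$-invariance of $V$ and $dx$ yields the exact identity $P[\abs{M_2(u)}^2] = \int_X V\,(M_2(u))^2 = \int_X V\,\abs{u}^2 = P[\abs{u}^2]$ from the Result of \S~\ref{sec:potential-energy}. For the self-potential, I would apply the convolution theorem (\ref{thm:linearconvI}) not to $u$ but to the function $\abs{u}^2$, using that $h$ is $G$-invariant and positive-definite and that $M_1(\abs{u}^2) = (M_2(u))^2$; this gives $Q[\abs{M_2(u)}^2] \leq Q[\abs{u}^2]$. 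Because $b \geq 0$, summing the three yields $E[M_2(u)] \leq E[u]$.

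The conclusion then follows from minimality. Since $u$ is a ground state and $M_2(u)$ is admissible, $E[u] \leq E[M_2(u)]$, forcing $E[M_2(u)] = E[u]$. Subtracting the exact potential identity leaves $\big(T[u] - T[M_2(u)]\big) + b\big(Q[\abs{u}^2] - Q[\abs{M_2(u)}^2]\big) = 0$; both summands are nonnegative and $b \geq 0$, so each must vanish. In particular $T[M_2(u)] = T[u]$, and the equality condition of the kinetic corollary forces $u$ to agree almost everywhere with a nonnegative radial function. A radial function on $\reals^{\Dim}$ is exactly one invariant under $G = SO(\Dim)$, so $u(g.x) = u(x)$ for every $g \in G$, as claimed.

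I expect the main subtlety to be deciding which term does the work: because $b$ may equal zero, one cannot appeal to strict monotonicity of $Q$, so the whole symmetry conclusion has to come from the equality case of the kinetic corollary. That corollary in turn rests on the decomposition $\reals^{\Dim} \cong \reals^+ \times S^{\Dim-1}$ (hence the hypothesis $G = SO(\Dim)$), so this is the step that genuinely uses the spherical setting. The remaining care is bookkeeping: confirming that finiteness of $E[u]$ really places $u$ and $\abs{u}^2$ inside the function classes required by \ref{thm:symkinen} and \ref{thm:linearconvI}, and that $M_2(u)$ satisfies the norm constraint exactly.
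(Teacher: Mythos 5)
Your proof is correct and follows essentially the same route as the paper's: compare $u$ with the admissible competitor $M_2(u)$ term by term, use $b\geq 0$ to sum the inequalities, and extract the symmetry from the equality case of the kinetic-energy corollary. Your explicit bookkeeping of why each nonnegative difference must vanish is a slightly more careful rendering of the paper's one-line contrapositive, but the substance is identical.
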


\begin{proof}
  Let \(u\) be the minimiser. We have \( N=\norm{u}_2^2 = \norm{M_2(u)}_2^2 \), so \(M_2(u)\) is also in the space over which we minimise. Moreover, the previous two sections give us the inequalities
  \begin{align*}
    T[u] &\geq T[M_2(u)], \\
    P[u] &= P[M_2(u)], \\
    Q[u] &\geq Q[M_2(u)].
  \end{align*}
  Equality occurs in the first if and only if \(u\) is \(G\)-invariant. Hence if \( u \) is not \( G\)-invariant, \( E[u] > E[M_2(u)] \), contradicting minimality of \(u\).
\end{proof}

\begin{theorem}[Ground States of Symmetric Potentials with Self-Attraction are Symmetric]
\label{thm:Gselfinvaratt}
  Let \( X,G,dg,dx \) be as in \S~\ref{sec:definitions}, \( V(g.x) = V(x) \) for every \(g \in G\), and
  \begin{equation}
    E[u] = T[u] + P[\abs{u}^2] + b Q[\abs{u}^2]
  \end{equation}
  Suppose \(b>0\). Then if \(u\) is a (constrained or unconstrained) ground state of \(E\), it must satisfy \( u(g.x)=u(x) \) for every \(g \in G\).
\end{theorem}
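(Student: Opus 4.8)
The plan is to re-run the term-by-term competition between $E[u]$ and $E[M_2(u)]$ used in the preceding theorem, but now to extract the rigidity from the self-potential term $Q$ rather than the kinetic term, since for a general pair $(X,G)$ we can no longer invoke the decomposition $\reals^{\Dim} \cong \reals^+ \times S^{\Dim-1}$ that made $T$ strictly decrease. Let $u$ be a ground state. By norm-preservation $\norm{M_2(u)}_2 = \norm{u}_2$, so $M_2(u)$ lies in the same (constrained or unconstrained) admissible class. I would then assemble the three comparisons: $T[M_2(u)] \leq T[u]$, from \autoref{thm:pwconvexgrad} with $p=2$ and $f(x,g) = u(g.x)$, integrated over $X$ and combined with the $G$-invariance of the kinetic energy (which replaces the explicit radial computation of \S\ref{sec:kinetic-energy}); $P[\abs{M_2(u)}^2] = P[\abs{u}^2]$, the invariance identity of \S\ref{sec:potential-energy}; and $Q[\abs{M_2(u)}^2] \leq Q[\abs{u}^2]$, from \autoref{thm:linearconvI} applied to the nonnegative function $w = \abs{u}^2$, whose $G$-average $\int_G w(g.x)\,dg$ is exactly $M_2(u)(x)^2$.

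Adding these with weight $b > 0$ on the last gives $E[M_2(u)] \leq E[u]$, so minimality forces equality; since the three differences are individually signed, each must vanish. The decisive equality is $Q[\abs{M_2(u)}^2] = Q[\abs{u}^2]$: because $b > 0$ and $h$ is positive-definite in the sense of \eqref{eq:+vedef}, the equality clause of \autoref{thm:linearconvI} applies to $w = \abs{u}^2$ and shows that $\abs{u}^2$, hence $\abs{u}$, is $G$-invariant a.e.; equivalently $M_2(u) = \abs{u}$.

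At this stage the modulus is symmetric but the phase is untouched, and this is the main obstacle, because $P$ and $Q$ depend on $u$ only through $\abs{u}$ and so cannot constrain a $g$-dependent phase. I would recover it through the kinetic term. Since $E$ depends on $u$ only through $\abs{u}$ and $\abs{\nabla u}$, and the diamagnetic inequality \eqref{eq:29} gives $T[\abs{u}] \leq T[u]$, the function $\abs{u}$ is itself a ground state; minimality then forces $T[\abs{u}] = T[u]$. The equality case of \eqref{eq:29} forces $\bar{u}\,\nabla u/\abs{u}$ to be real a.e., i.e. the phase of $u$ is locally constant; on a connected support this is a single constant $\theta$, so $u = e^{i\theta}\abs{u}$.

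Combining the two conclusions closes the argument: with $\abs{u}$ being $G$-invariant and the phase constant, $u(g.x) = e^{i\theta}\abs{u(g.x)} = e^{i\theta}\abs{u(x)} = u(x)$ for every $g \in G$. The two places I expect to need care are the justification of $T[M_2(u)] \leq T[u]$ without the product structure of \S\ref{sec:kinetic-energy} --- where one leans only on the pointwise \autoref{thm:pwconvexgrad} together with the $G$-invariance of $dx$ --- and the passage from a locally constant to a globally constant phase, which needs connectedness of $\{u \neq 0\}$ (or a component-by-component argument exploiting that $\abs{u}$ is already $G$-invariant).
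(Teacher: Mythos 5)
Your proposal follows the paper's own route exactly: compare $E[u]$ with $E[M_2(u)]$ term by term, using norm-preservation for admissibility, the invariance identity for $P$, the gradient convexity inequality for $T$, and \autoref{thm:linearconvI} applied to $\abs{u}^2$ for $Q$, with the strictness supplied by $Q$ because $b>0$. The one substantive difference is that you are more careful than the paper at the final step: the equality case of \autoref{thm:linearconvI} applied to $\abs{u}^2$ yields only that $\abs{u}$ is $G$-invariant, whereas the paper asserts outright that equality in the third inequality forces $u$ itself to be $G$-invariant and stops there. Your additional phase-recovery argument --- $E$ depends on $u$ only through $\abs{u}$ and $\abs{\nabla u}$, so $\abs{u}$ is also a minimiser, and equality in $\abs{\nabla\abs{u}}\leq\abs{\nabla u}$ forces a locally constant phase --- is sound and closes a gap the paper leaves open, modulo the connectedness caveat you correctly flag: on a disconnected $\{u\neq 0\}$ whose components are permuted by $G$, distinct constant phases on distinct components would give a minimiser with $G$-invariant modulus but $u(g.x)\neq u(x)$, so some such hypothesis (or the convention of identifying $u$ with $\abs{u}$) is genuinely needed for the theorem as stated.
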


The proof of this is much the same as the previous theorem, but uses the third term:

\begin{proof}
  Let \(u\) be the minimiser. We have \( N=\norm{u}_2^2 = \norm{M_2(u)}_2^2 \), so \(M_2(u)\) is also in the space over which we minimise. Moreover, the previous two sections give us the inequalities
  \begin{align*}
    T[u] &\geq T[M_2(u)], \\
    P[u] &= P[M_2(u)], \\
    Q[u] &\geq Q[M_2(u)].
  \end{align*}
  Equality occurs in the third if and only if \(u\) is \(G\)-invariant. Hence if \( u \) is not \( G\)-invariant, \( E[u] > E[M_2(u)] \), contradicting minimality of \(u\).
\end{proof}

This covers attractive self-potential terms, where it is favourable for the field to clump together (this tendency being counteracted by the kinetic energy in the case in point). More difficult are repulsive self-potential energies, where the external potential is vital for a ground state to exist. These issues in themselves do not concern us here; instead, we shall prove that the symmetry result still holds for repulsive self-potential terms, if they satisfy a slightly different criterion for positive-definiteness:

\begin{lemma}
  Suppose that \( h(x,y) \) is positive-definite on the space \( \mathcal{I}_0 \) of integrable functions with integral zero. Let \( u \in L^1 \) be positive and satisfy \( \int_X u = N \). Let \( \rho \) be a positive function with \( \int_X \rho = N \). Then
  \begin{equation}
    H_{\rho}(u,v) = \int_X \int_X (u-\rho)(x) h(x-y) (v-\rho)(y) \, dx \, dy
  \end{equation}
  is a positive-definite on \( \mathcal{I}_N \), the set of functions with integral \( N \).
\end{lemma}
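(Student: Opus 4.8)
The plan is to reduce the claim to the hypothesis by an affine translation of the function space. First I would observe that for any \( u, v \in \mathcal{I}_N \) the differences \( u-\rho \) and \( v-\rho \) are integrable (all of \( u, v, \rho \) lie in \( L^1 \)) and have integral \( \int_X (u-\rho) = N - N = 0 \); hence \( u-\rho, v-\rho \in \mathcal{I}_0 \). The entire content of the lemma is that the shift \( u \mapsto u-\rho \) carries the affine space \( \mathcal{I}_N \) onto the linear space \( \mathcal{I}_0 \), with \( \rho \) mapping to the origin.

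Next I would recognise \( H_\rho \) as nothing more than the original convolution form evaluated on the shifted arguments. Writing \( H(f,g) = \int_X \int_X f(x) h(x-y) g(y) \, dx \, dy \) for the bilinear form whose positive-definiteness on \( \mathcal{I}_0 \) is assumed, the definition gives immediately \( H_\rho(u,v) = H(u-\rho, v-\rho) \), and in particular \( H_\rho(u,u) = H(u-\rho, u-\rho) \). Setting \( w := u-\rho \in \mathcal{I}_0 \) and invoking the hypothesis, positive-definiteness of \( h \) on \( \mathcal{I}_0 \) gives \( H(w,w) > 0 \) whenever \( w \neq 0 \), i.e. whenever \( u \neq \rho \) a.e., with \( H(w,w) = 0 \) precisely when \( u = \rho \). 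This is exactly the assertion that the quadratic functional \( u \mapsto H_\rho(u,u) \) is positive-definite on the affine space \( \mathcal{I}_N \), with \( \rho \) playing the role of the origin.

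I expect the only point needing genuine care — rather than the positivity itself, which is immediate from the translation — to be the convergence bookkeeping for the \emph{bilinear} form \( H(u-\rho, v-\rho) \) when \( u \neq v \); the hypothesis of positive-definiteness on \( \mathcal{I}_0 \) should be read (as in \autoref{thm:linearconvI}) as asserting that the associated quadratic integral both converges and is positive there. To pass from convergence of the quadratic form to that of the bilinear form I would use the polarisation identity \( 4 H(w_1,w_2) = H(w_1+w_2, w_1+w_2) - H(w_1-w_2, w_1-w_2) \), valid because \( h \) is even and so \( H \) is symmetric; since \( \mathcal{I}_0 \) is a linear space, \( w_1 \pm w_2 \in \mathcal{I}_0 \), so each term on the right converges and hence so does \( H(u-\rho, v-\rho) \). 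Finally I would remark that the positivity hypotheses on \( u \) and \( \rho \) are not used anywhere in this argument — only the integral conditions \( \int_X u = \int_X \rho = N \) matter — so their inclusion is purely to fix the setting of the later application, where \( u \) and \( \rho \) are mass densities.
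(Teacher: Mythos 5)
Your proof is correct and follows exactly the paper's argument: the paper's own proof is the one-line observation that \( \int_X (u-\rho) = 0 \) places \( u-\rho \) in \( \mathcal{I}_0 \), whence positive-definiteness of \( h \) there gives \( H_\rho(u,u) \geq 0 \) with equality iff \( u = \rho \). Your additional remarks on polarisation and on the positivity hypotheses being unused are sensible elaborations but do not change the route.
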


\begin{proof}
  This is clear: since \( \int_X (u-\rho) = 0 \), \(H_{\rho}(u,u) \geq 0 \), with equality if and only if \( u=\rho \), by the positive-definiteness of \( h \) on \( \mathcal{I}_0 \).
\end{proof}

\begin{corollary}[Linear symmetraveraging of convolution, II]
  If \(h,\rho\) are as in the previous lemma and are also \(G\)-invariant, then \( H(u,u) \geq H( U,U ) \), where \( U \) is as in \autoref{thm:linearconvI}.
\end{corollary}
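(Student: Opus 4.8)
The plan is to reduce the statement to \autoref{thm:linearconvI} by translating away $\rho$. Write $B(f,g) = \int_X\int_X f(x)\,h(x-y)\,g(y)\,dx\,dy$ for the underlying bilinear form, so that, by the definition in the previous lemma, $H(u,u) = B(u-\rho,u-\rho)$ and $H(U,U) = B(U-\rho,U-\rho)$. Setting $w = u-\rho$, the equal-mass hypotheses $\int_X u = N = \int_X \rho$ give $\int_X w = 0$, so $w$ lies in the space $\mathcal{I}_0$ on which $h$ is assumed positive-definite.

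First I would identify the $G$-average of $w$. Since $\rho$ is $G$-invariant we have $\int_G \rho(g.x)\,dg = \rho(x)$, hence
\begin{equation*}
  \int_G w(g.x)\,dg = \int_G u(g.x)\,dg - \rho(x) = U(x) - \rho(x).
\end{equation*}
Thus $U - \rho$ is exactly the $G$-average of $w$, and the desired inequality $H(u,u) \geq H(U,U)$ becomes $B(w,w) \geq B(\bar w, \bar w)$ with $\bar w := U - \rho$. This is precisely the conclusion of \autoref{thm:linearconvI}, now for the function $w$ and with positive-definiteness available on $\mathcal{I}_0$ instead of on all of $\mathcal{I}$; it is also the situation foreseen in the remark following that theorem.

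I would then run the same decomposition as before, expanding
\begin{equation*}
  B(w,w) = B(\bar w, \bar w) + 2\,B(\bar w, w-\bar w) + B(w-\bar w, w-\bar w).
\end{equation*}
The last term is the one that needs positive-definiteness, and for this I must check that $w - \bar w \in \mathcal{I}_0$: indeed $\int_X w = 0$ and $\int_X \bar w = \int_X U - \int_X \rho = N - N = 0$ (using $G$-invariance of $dx$ to get $\int_X U = N$), so their difference has integral zero and the term is nonnegative, vanishing if and only if $w = \bar w$ a.e. The cross term vanishes exactly as in \autoref{thm:linearconvI}: the convolution $\bar w * h$ is $G$-invariant by the convolution lemma (both $\bar w$ and $h$ being $G$-invariant), and its integral against $w - \bar w$ over $G$ is zero because $\bar w$ is the $G$-average of $w$.

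The main obstacle here is purely bookkeeping rather than analytic: the whole argument hinges on every function fed into the positive-definiteness hypothesis genuinely lying in $\mathcal{I}_0$. This is where the equal-mass conditions $\int_X u = \int_X \rho = N$ and the $G$-invariance of $\rho$ do their work, ensuring both $w$ and $w - \bar w$ have integral zero. Once these membership checks are in place, the remainder is a verbatim transcription of the proof of \autoref{thm:linearconvI} applied to $w = u - \rho$.
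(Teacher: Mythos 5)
Your proposal is correct and is essentially the paper's own argument: writing $w=u-\rho$, $\bar w=U-\rho$, your expansion $B(w,w)=B(\bar w,\bar w)+2B(\bar w,w-\bar w)+B(w-\bar w,w-\bar w)$ is exactly the paper's decomposition $H(u-\rho,u-\rho)=H(U-\rho,U-\rho)+2H(u-U,U-\rho)+H(u-U,u-U)$, with the cross term killed by the same $G$-invariance-of-the-convolution argument and the last term handled by positive-definiteness on $\mathcal{I}_0$. Your explicit bookkeeping that $w$ and $w-\bar w=u-U$ have integral zero is a point the paper leaves implicit, and is a welcome addition.
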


The proof is almost exactly the same as that of \autoref{thm:linearconvI}, save that the decomposition
\begin{align*}
  H_{\rho}(u,u) &= H(u-\rho,u-\rho) \\
                &= H(U-\rho,U-\rho)+2H(u-U,U-\rho)+H(u-U,u-U) \\
                &= H_{\rho}(U,U)+2H(u-U,U-\rho)+H(u-U,u-U)
\end{align*}
is used instead of \eqref{eq:linearconvdecomp}.

With this, we can at last prove
\begin{theorem}
  Let \(V\) be \(G\)-invariant, \(b > 0\), and \(h\) positive-definite on the set of functions with \( \int_X f = 0 \). Then if \(E[\cdot]\) has a ground state \(u\) with \(\norm{u}_2^2 = N \), then \(u\) is \(G\)-invariant.
\end{theorem}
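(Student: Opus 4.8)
The plan is to follow verbatim the template of the two preceding theorems: given a ground state $u$, form its orbital mean $M_2(u)$ and show that replacing $u$ by $M_2(u)$ cannot increase $E$, while strictly decreasing the self-potential term unless $u$ is already $G$-invariant. First I would record that $\norm{M_2(u)}_2^2 = \norm{u}_2^2 = N$ by norm-preservation, so $M_2(u)$ is again admissible in either the constrained or unconstrained problem. The kinetic term obeys $T[u] \geq T[M_2(u)]$ by the kinetic-energy symmetrisation of \S\ref{sec:kinetic-energy}, and the external-potential identity of \S\ref{sec:potential-energy} gives $P[\abs{u}^2] = P[\abs{M_2(u)}^2]$ unchanged. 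Everything therefore reduces, as in \autoref{thm:Gselfinvaratt}, to proving
\begin{equation*}
  Q[\abs{u}^2] \geq Q[\abs{M_2(u)}^2],
\end{equation*}
with equality only if $u$ is $G$-invariant; since $b>0$, any strictness here forces $E[u] > E[M_2(u)]$, contradicting minimality. The genuinely new point is that $h$ is now positive-definite only on $\mathcal{I}_0$, so the bare argument of \autoref{thm:linearconvI} cannot be applied to the mass-$N$ density $\abs{u}^2$.

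To handle $Q$, I would fix a positive, $G$-invariant $\rho$ with $\int_X \rho = N$, obtained by averaging over $G$ any positive $L^1$ function of integral $N$ (invariance of $dx$ preserves the integral). Write $f = \abs{u}^2$, whose $G$-average is $\int_G f(g.x)\,dg = (M_2(u)(x))^2 = \abs{M_2(u)(x)}^2 =: U$, itself of integral $N$. Letting $H(\cdot,\cdot)$ denote the symmetric bilinear form of \eqref{eq:6}, so that $Q[\abs{u}^2] = H(f,f)$ and $Q[\abs{M_2(u)}^2] = H(U,U)$, the identity $H(f,f) = H_{\rho}(f,f) + 2H(f,\rho) - H(\rho,\rho)$ and its analogue for $U$ combine to give
\begin{equation*}
  H(f,f) - H(U,U) = H_{\rho}(f,f) - H_{\rho}(U,U) + 2H(f-U,\rho).
\end{equation*}
The crucial cancellation is that the cross term vanishes: since $\rho$ and $h$ are $G$-invariant, $h*\rho$ is $G$-invariant by the convolution lemma, while $f-U$ has vanishing $G$-average, so the same averaging-and-substitution computation used for the middle term of \autoref{thm:linearconvI} gives $H(f-U,\rho)=0$. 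Hence $H(f,f)-H(U,U) = H_{\rho}(f,f) - H_{\rho}(U,U)$, and the second convolution corollary (Linear symmetraveraging of convolution, II), applied with the density $f=\abs{u}^2$ in the role of its function and $U$ as the $G$-average, yields $H_{\rho}(f,f) \geq H_{\rho}(U,U)$, with equality if and only if $f=U$ a.e., i.e. $\abs{u}^2$—equivalently $u$—is $G$-invariant. This is precisely the required inequality for $Q$.

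The main obstacle is conceptual rather than computational: in the repulsive regime $\abs{u}^2 \notin \mathcal{I}_0$, so neither $H(f,f)$ nor the positivity of $h$ is directly at one's disposal, and one must transfer the entire comparison onto the shifted form $H_{\rho}$, on which positive-definiteness does hold by the preceding lemma. The two facts that make this transfer clean are (i) the vanishing of $H(f-U,\rho)$—so that the affine pieces $2H(f,\rho)-H(\rho,\rho)$ contribute identically to $f$ and to $U$—and (ii) that $U$ is exactly the $G$-average of $f$, so Corollary II applies with no adjustment. One must only check that $\rho$ can be chosen positive, $G$-invariant and of mass $N$, and that $H_{\rho}$ converges on $\mathcal{I}_N$; both are guaranteed by the hypotheses and the construction of $\rho$. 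With $Q[\abs{u}^2] \geq Q[\abs{M_2(u)}^2]$ and its equality case in hand, the contradiction argument using $b>0$ closes the proof exactly as in \autoref{thm:Gselfinvaratt}.
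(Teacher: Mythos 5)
Your proposal is correct and follows essentially the same route as the paper: decompose $Q$ via the shifted form $H_{\rho}$ with a positive, $G$-invariant $\rho$ of mass $N$, kill the cross term using the $G$-invariance of $h*\rho$, and apply the second symmetraveraging corollary to get $Q[\abs{u}^2]\geq Q[(M_2(u))^2]$ with the stated equality case. Your write-up is in fact slightly more explicit than the paper's about why the affine pieces cancel (via $H(f-U,\rho)=0$) and about the signs in the identity $H(f,f)=H_{\rho}(f,f)+2H(f,\rho)-H(\rho,\rho)$, but the argument is the same.
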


\begin{proof}
  Taking \(\rho\) an integrable, \(G\)-invariant function as above, we have
  \begin{align*}
    E[u] &= T[u]+P[\abs{u}^2] + b Q[u^2] \\
         &= T[u]+P[\abs{u}^2] + b \left( H_{\rho}(\abs{u}^2,\abs{u}^2) - 2H(\rho,\abs{u}^2) + H(\rho,\rho) \right).
  \end{align*}
  Suppose now that \(u\) is a minimiser. We have the equality and inequality
  \begin{align*}
    H(\rho,\abs{u}^2) &= H(\rho,(M_2(u))^2) \\
    H_{\rho}(\abs{u}^2,\abs{u}^2) &\geq H_{\rho}((M_2(u))^2,(M_2(u))^2),
  \end{align*}
  with equality if and only if \( \abs{u} = M_2(u) \) a.e.. These imply that
  \begin{equation*}
    Q[\abs{u}^2] \geq Q[(M_2(u))^2],
  \end{equation*}
  with equality under the same conditions. We also have
  \begin{align*}
    T[u] &\geq T[M_2(u)], \\
    P[\abs{u}^2] &= P[(M_2(u))^2].
  \end{align*}
  It follows that if \( u \) is not \(G\)-invariant, \( E[u] \geq E[M_2(u)] \), which contradicts the minimality of \(u\).
\end{proof}

\section{The Relativistic Kinetic Energy}
\label{sec:relat-kinet-energy}

Alternatively, we may be interested in relativistic scalar equations, where \( T[u] \) is replaced by a relativistic kinetic energy; for Euclidean space \( X=\reals^{\Dim} \), this is given by
\begin{equation}
\label{eq:rKE}
  R[u] := \langle u, (\sqrt{p^2+m^2}-m)u \rangle = \iint \abs{u(x)-u(y)}^2 R_m(x-y) \, dx \, dy,
\end{equation}
where for \( x \in \reals^{\Dim} \),
\begin{equation}
  R_m(x) = \left(\frac{m}{2\pi}\right)^{(\Dim+1)/2} \frac{K_{(\Dim+1)/2}(m\abs{x})}{\abs{x}^{(\Dim+1)/2}},
\end{equation}
where \( K_{\nu}(z) \) is the modified Bessel function asymptotic to \( \sqrt{\pi/(2z)} e^{-z} \) as \( z \to \infty\) in the right half-plane (this may be deduced from \S\S~7.11--2 of \cite{lieb2001analysis}, which does everything but explicitly state the general form).

In order to produce a result similar to that for the ordinary kinetic energy, we shall require equivalents of the other inequalities, which involve several functions. We begin with the equivalent of the Hardy--Littlewood inequality, which is the following:

\begin{theorem}[Orbital Mean of two functions]
  Suppose \( dx \) is \(G\)-invariant, \( 1/p+1/q = 1 \), and \( u \in L^p(X) \), \( v \in L^q(X) \) are nonnegative. Then
  \begin{equation}
    \int_X u(x) v(x) \, dx \leq \int_X M_p(u)(x) M_q(v)(x) \, dx
  \end{equation}
\end{theorem}

\begin{proof}
  Since \(dx\) is \(G\)-invariant, we have
  \begin{align*}
    \int_X u(x) v(x) \, dx &= \int_G \int_X u(g.x) v(g.x) \, dx \, dg \\
                           &= \int_X \left( \int_G u(g.x)v(g.x) \, dg  \right) dx \\
                           &\leq \int_X \left( \int_G \abs{u(g.x)}^p \, dg  \right)^{1/p} \left( \int_G \abs{v(g.x)}^q \, dg  \right)^{1/q} dx \\
                           &= \int_X M_p(u)(x) M_q(v)(x) \, dx,
  \end{align*}
  using H\"older's inequality.
\end{proof}

The cases of equality in such results appear to be in general hard to determine: here, for example, equality occurs in the only inequality used, i.e. H\"older's inequality (for \(1<p<\infty\)) when one of \(u,v\) is identically zero, or
\begin{equation*}
  \abs{v(g.x)} = \lambda(x) \abs{u(g.x)}^{p-1} \iff \abs{v(y)} = \lambda(g^{-1}.y) \abs{u(y)}^{p-1}
\end{equation*}
for some \( \lambda(x)>0\), and almost every \(x\), \(y\) and \(g\). Then the left-hand side is independent of \(g\), so \(\lambda\) is \(G\)-invariant; we have then returned to the case of the invariance of the \(p\)-norm on applying \(M_p\), albeit with a different measure, \(\lambda(x) \, dx\).

The Hardy--Littlewood generalisation is not especially useful to us, but the extension to the convolution integral of three functions, corresponding to Riesz's inequality, is. We have

\begin{theorem}[Orbital Mean of the convolution of three functions]
  Let \(X\) be a vector space, and \( G \) a compact group acting linearly on \( X \); let \( dx \) be an \(G\)-invariant measure on \(X\). Let \( u,v,w \) be positive. Then
  \begin{equation}
    \label{eq:66}
    \int_X \int_X u(x) v(x-y) w(y) \, dx \, dy \leq \int_X \int_X M_p(u)(x) M_q(v)(x-y) M_r(w)(y) \, dx \, dy,
  \end{equation}
  where \( 1/p+1/q+1/r=1 \).
\end{theorem}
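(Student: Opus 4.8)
The plan is to follow the template of the two-function Hardy--Littlewood analogue proved just above, exploiting the \emph{linearity} of the action to handle the convolution argument $x-y$. Since $dx$ is $G$-invariant and $G$ acts linearly, for any fixed $g$ the simultaneous substitution $x \mapsto g.x$, $y \mapsto g.y$ leaves both copies of $dx$ unchanged and sends the difference to $g.x - g.y = g.(x-y)$. Hence the left-hand side of \eqref{eq:66} is unchanged if we replace the integrand $u(x)v(x-y)w(y)$ by $u(g.x)\,v(g.(x-y))\,w(g.y)$, for every $g \in G$.

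First I would average this identity over $G$ against the invariant probability measure $dg$; since the integral is independent of $g$, its value is preserved. Then, because $u,v,w$ are positive, the integrand is nonnegative, and Tonelli's theorem lets me bring the $dg$-integration inside the $dx\,dy$-integration, giving
\begin{equation*}
  \int_X \int_X u(x) v(x-y) w(y) \, dx \, dy = \int_X \int_X \left( \int_G u(g.x)\, v(g.(x-y))\, w(g.y) \, dg \right) dx \, dy.
\end{equation*}

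Next I would apply the three-factor Hölder inequality to the inner $G$-integral with exponents $p,q,r$ satisfying $1/p+1/q+1/r=1$, bounding it above by the product
\begin{equation*}
  \left( \int_G u(g.x)^p \, dg \right)^{1/p} \left( \int_G v(g.(x-y))^q \, dg \right)^{1/q} \left( \int_G w(g.y)^r \, dg \right)^{1/r}.
\end{equation*}
The first and third factors are exactly $M_p(u)(x)$ and $M_r(w)(y)$. The crucial point --- and the only place the argument could go wrong --- is the middle factor: because the action is linear, $g.(x-y)$ is genuinely the orbit point of $z=x-y$, so $\left(\int_G v(g.(x-y))^q\,dg\right)^{1/q} = M_q(v)(x-y)$ by the very definition of the orbital mean. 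Substituting these three identifications and integrating in $x,y$ yields \eqref{eq:66}.

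The main obstacle is conceptual rather than technical: everything hinges on linearity, which is what lets the single group element $g$ act coherently on $x$, $y$, and $x-y$ at once, so that the middle term lands precisely on $M_q(v)$ evaluated at $x-y$ rather than at some unrelated point. The analytic steps (invariance of $dx$, Tonelli for nonnegative integrands, and Hölder) are all routine and parallel the two-function case, so I would not expect to dwell on them. I would also note that the positivity hypothesis on $u,v,w$ is exactly what frees us from any integrability bookkeeping when interchanging the orders of integration.
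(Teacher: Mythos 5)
Your argument is correct and is essentially identical to the paper's own proof: both exploit the linearity of the action together with the $G$-invariance of $dx$ to rewrite the integral as an average over $G$ of $u(g.x)\,v(g.(x-y))\,w(g.y)$, interchange the order of integration, and apply the three-function H\"older inequality to the inner $G$-integral. Your explicit emphasis on why linearity makes the middle factor come out as $M_q(v)(x-y)$ is a point the paper leaves implicit, but the route is the same.
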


\begin{proof}
  The proof is much the same as before: the whole integral is \(G\)-invariant since the measure is, and so
  \begin{align*}
    \int_X \int_X u(x) v(x-y) w(y) \, dx \, dy 
    &= \int_G \left( \int_X \int_X u(g.x) v(g.(x-y)) w(g.y) \, dx \, dy \right) dg \\
    &= \int_X \int_X \left( \int_G u(g.x) v(g.(x-y)) w(g.y) \, dg \right) dx \, dy \\
    &\leq \int_X \int_X M_p(u)(x) M_q(v)(x-y) M_r(w)(y) \, dx \, dy.
  \end{align*}
  by H\"older's inequality for three functions.
\end{proof}

There are obvious generalisations of this to many functions using the more general form of H\"older's inequality.

This is rather more general than we need, however: if we instead suppose that the middle function is already \(G\)-invariant, but not necessarily positive, we obtain something more useful:

\begin{theorem}
  Let \(X\) be a linear space, and \( G \) a compact group acting linearly on \( X \); let \( dx \) be an \(G\)-invariant measure on \(X\). Let \( u,v \geq 0 \), and let \(h\) be \( G \)-invariant. Then
  \begin{equation}
    \label{eq:GRiesz}
    \int_X \int_X u(x) h(x-y) v(y) \, dx \, dy \leq \int_X \int_X M_2(u)(x) h(x-y) M_2(v)(y) \, dx \, dy
  \end{equation}
  and
  \begin{equation}
    \label{eq:GinvarKE}
    \int_X \int_X \abs{u(x)-u(y)}^2 h(x-y) \, dx \, dy \geq \int_X \int_X \abs{M_2(u)(x)-M_2(u)(y)}^2 h(x-y) \, dx \, dy.
  \end{equation}
\end{theorem}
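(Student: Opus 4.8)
The plan is to prove both inequalities by the same device that underlies the external potential result and the two-- and three--function orbital mean theorems: exploit the \(G\)-invariance of \(dx\) and of \(h\) to turn each double integral into a \(G\)-average, push the \(G\)-integral inside by Tonelli, and reduce the whole matter to a pointwise Cauchy--Schwarz estimate on the inner orbital integral.

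For \eqref{eq:GRiesz} I would first change variables \(x \mapsto g.x\), \(y \mapsto g.y\). Because \(G\) acts linearly we have \((g.x)-(g.y) = g.(x-y)\), and \(h(g.(x-y)) = h(x-y)\) by \(G\)-invariance of \(h\), while \(dx\) is preserved; hence the integral is independent of \(g\) and equals its average \(\int_G(\,\cdot\,)\,dg\). Applying Tonelli (legitimate since \(u,v\geq 0\) and, in the target application, \(h\geq 0\)) to interchange \(\int_G\) with the \(x,y\) integrations leaves the inner integral \(\int_G u(g.x)\,v(g.y)\,dg\). Cauchy--Schwarz on \(L^2(G,dg)\) bounds this pointwise by \(M_2(u)(x)\,M_2(v)(y)\); multiplying through by \(h(x-y)\geq 0\) and integrating gives \eqref{eq:GRiesz}.

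The second inequality \eqref{eq:GinvarKE} I would prove directly on the difference, rather than by expanding \(\abs{u(x)-u(y)}^2\) into the three separate terms \(u(x)^2\), \(-2u(x)u(y)\), \(u(y)^2\) and applying \eqref{eq:GRiesz} to the cross term. The reason is that for the relativistic kernel \(R_m\), which behaves like \(\abs{x}^{-(\Dim+1)}\) near the origin, the individual ``diagonal'' integrals \(\iint u(x)^2 R_m(x-y)\) diverge; only the combination that vanishes on the diagonal is integrable. So instead I would run the same change of variables and averaging on \(\abs{u(x)-u(y)}^2\) as a whole, reducing to the pointwise statement
\begin{equation*}
  \int_G \abs{u(g.x)-u(g.y)}^2 \, dg = M_2(u)(x)^2 - 2\int_G u(g.x)u(g.y)\,dg + M_2(u)(y)^2 \geq \abs{M_2(u)(x)-M_2(u)(y)}^2,
\end{equation*}
where the inequality is exactly the same Cauchy--Schwarz bound on the cross term (and \(M_2(u)\geq 0\) turns the right-hand side into a genuine square). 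Integrating against \(h(x-y)\geq 0\) then yields \eqref{eq:GinvarKE}.

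The main obstacle is really the bookkeeping around positivity and integrability rather than any deep inequality. Two points need care. First, the step ``multiply the pointwise bound by \(h\) and integrate'' requires \(h\geq 0\), so the statement should be read with the positivity of the kernel (which holds for \(R_m\)) in force; sign-changing \(G\)-invariant kernels are not covered by this argument. Second, the interchange of \(\int_G\) with the spatial integrals must be justified without first splitting off the divergent diagonal pieces: handling \(\abs{u(x)-u(y)}^2\) as a single nonnegative integrand, so that Tonelli applies to the combined expression, is precisely what keeps everything finite and is the reason for treating \eqref{eq:GinvarKE} on its own rather than deducing it term by term from \eqref{eq:GRiesz}.
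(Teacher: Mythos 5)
Your proof is correct and is essentially the paper's own argument: both exploit the $G$-invariance of $dx$ and of $h$ to rewrite each double integral as a $G$-average, interchange integrals by Tonelli, and apply Cauchy--Schwarz on $L^2(G,dg)$ pointwise (the paper phrases the second inequality via the reverse triangle inequality in $L^2(G)$, which after squaring and expanding is exactly your cross-term Cauchy--Schwarz bound). Your remark that $h\geq 0$ is needed to integrate the pointwise estimate is also on target --- the paper invokes nonnegativity of $h$ in its proof even though the theorem statement lists only $G$-invariance.
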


\begin{proof}
  \(h\) is \( G\)-invariant, so
  \begin{equation*}
    h(x-y) = \int_G h(g^{-1}.(x-y)) \, dg.
  \end{equation*}
  Inserting this into the left-hand side of \eqref{eq:GRiesz} and interchanging the order of integration,
  \begin{align*}
    \int_X \int_X u(x) h(x-y) v(y) \, dx \, dy &= \int_X \int_X \left( \int_G h(g^{-1}.(x-y)) \, dg \right) u(x)v(y) \, dx \, dy \\
                                               &= \int_G \int_X \int_X h(g^{-1}.(x-y)) u(x)v(y) \, dx \, dy \, dg.
  \end{align*}
  We can now make the substitution \( x=g.x' \), \( y = g.y' \) to find that
  \begin{align*}
    \int_G \int_X \int_X h(g^{-1}.(x-y)) u(x)v(y) \, dx \, dy \, dg &= \int_G \int_X \int_X h(x'-y') u(g.x')v(g.y') \, dx' \, dy' \, dg \\
    &=  \int_X \int_X h(x'-y') \left( \int_G u(g.x')v(g.y') \, dg \right) dx' \, dy'.
  \end{align*}
  Since \( h \) is nonnegative, we can apply the Cauchy--Schwarz inequality to the \(G\)-integral:
  \begin{align*}
    \int_G u(g.x')v(g.y') \, dg \leq \left( \int_G u(g.x') \, dg \right)^{1/2} \left( \int_G v(g.y') \, dg \right)^{1/2} = M_2(u)(x) M_2(v)(y),
  \end{align*}
  from which the first equation follows. The second inequality is much the same, but we replace Cauchy--Schwarz with the reverse triangle inequality,
  \begin{align*}
    \left(\int_G \abs{u(g.x)-u(g.y)}^2 \, dg \right)^{1/2} 
    &\geq \abs{\left(\int_G \abs{u(g.x)}^2 dg \right)^{1/2}-\left( \int_G \abs{u(g.y)}^2 \, dg \right)^{1/2} } \\
    &= \abs{M_2(u)(x)-M_2(u)(y)}. \qedhere
  \end{align*}
\end{proof}

\begin{remark}
  Equality in the Cauchy--Schwarz inequality\footnote{or equivalently the reverse triangle inequality} is attained when
  \begin{equation}
    \label{eq:34}
    v(g.x) = \lambda(x,y) u(g.y) \implies v(x) = \lambda(g^{-1}.x,g^{-1}.y) u(x),
  \end{equation}
  where \(\lambda\) is a positive, \(G\)-invariant function. By considering the case \(x=y\), it is immediate that \( v(x) = \lambda(g^{-1}.x,g^{-1}.x) u(x) = \lambda(x,x) u(x) \), and so it is sufficient to consider the case \(u=v\), where \( \lambda(x,x)=1 \). If \(X = \reals^{\Dim}\) and \(u\) is radial, \( \lambda \) is purely a function of \( \abs{x},\abs{y} \) and equal to \(1\) for \( \abs{x}=\abs{y} \), but this is by no means the only possibility. An investigation of this is beyond the scope of this work, and in fact unnecessary for our purposes, but may prove fruitful in future.
\end{remark}

We see that this is precisely what we need to prove that \eqref{eq:rKE} cannot increase under the orbital mean. Hence we obtain:

\begin{theorem}
  Let \(V \not\equiv 0\) be \(G\)-symmetric, \(h\) be positive-definite, and \(R\) be as above. Then if \(u\) is a minimiser of
  \begin{equation}
    E_R[u] := R[u] + P[u]+ Q[u],
  \end{equation}
  on the set with \( \norm{u}_2^2=N \) then \(u\) is radial.
\end{theorem}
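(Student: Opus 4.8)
The plan is to mimic the structure of the two preceding symmetrisation theorems (the spherically-symmetric case and \autoref{thm:Gselfinvaratt}), replacing the ordinary kinetic term \(T\) with the relativistic kinetic term \(R\). I would let \(u\) be a minimiser with \(\norm{u}_2^2=N\) and apply the orbital mean \(M_2\). The core of the argument is to verify, termwise, that passing from \(u\) to \(M_2(u)\) cannot increase \(E_R\), with strict decrease on at least one term unless \(u\) is already \(G\)-invariant. By the norm-preservation lemma, \(\norm{M_2(u)}_2^2=\norm{u}_2^2=N\), so \(M_2(u)\) lies in the same constraint set and the comparison is legitimate.

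First I would handle the three terms in turn. For the external potential \(P\), the Result of Section~7 gives exactly \(P[\abs{u}^2]=P[(M_2(u))^2]\) since \(V\) and \(dx\) are \(G\)-invariant. For the self-potential \(Q\), I would invoke \eqref{eq:GRiesz} with \(h\) positive-definite and \(G\)-invariant (taking \(u=v=\abs{u}^2\) there, or rather applying it to the convolution kernel defining \(Q\)), which yields \(Q[\abs{u}^2]\geq Q[(M_2(u))^2]\). For the relativistic kinetic term, the whole point of the preceding theorem is that \(R_m\) is itself \(G\)-invariant (it depends only on \(\abs{x}\)), so \eqref{eq:GinvarKE} applies directly with \(h=R_m\) and gives
\begin{equation*}
  R[u] = \iint \abs{u(x)-u(y)}^2 R_m(x-y) \geq \iint \abs{M_2(u)(x)-M_2(u)(y)}^2 R_m(x-y) = R[M_2(u)].
\end{equation*}
Summing the three, \(E_R[u]\geq E_R[M_2(u)]\).

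It remains to extract strictness. Here I would argue exactly as in the earlier theorems: since \(u\) is a minimiser and \(M_2(u)\) is admissible, all inequalities must in fact be equalities. The cleanest source of strictness is the \(Q\)-term via the positive-definiteness of \(h\): examining the equality case in \eqref{eq:GRiesz}, equality in the Cauchy--Schwarz step forces \(\abs{u}=M_2(u)\) a.e., which is precisely \(G\)-invariance of \(\abs{u}\). Alternatively one can run the \(H_\rho\)-decomposition used in the final theorem of Section~8 to pin down equality. I would then conclude that \(u\) is radial (here \(G\)-invariant), completing the proof.

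The main obstacle is the equality analysis, not the inequalities themselves. The inequalities follow mechanically from the quoted results, but the remark following the three-function theorem explicitly warns that the equality cases for \eqref{eq:GRiesz} and \eqref{eq:GinvarKE} are delicate and not fully characterised. So the delicate point is to ensure that at least one term genuinely forces \(G\)-invariance rather than merely permitting it; I expect the positive-definiteness of \(h\) (through the \(Q\)-term, as in \autoref{thm:Gselfinvaratt}) to be the reliable lever, with the relativistic-kinetic inequality \eqref{eq:GinvarKE} contributing a non-strict bound that is nonetheless consistent. One subtlety worth flagging is that the statement writes \(P[u]\) and \(Q[u]\) rather than \(P[\abs{u}^2]\) and \(Q[\abs{u}^2]\); I would interpret these in the sense of the earlier definitions so that the \(M_2\)-comparisons (which are naturally phrased in terms of \(\abs{u}^2\)) apply verbatim.
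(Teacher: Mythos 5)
Your proposal matches the paper's own (very terse) proof, which simply says to repeat the argument of \autoref{thm:Gselfinvaratt} with \(T\) replaced by \(R\), the strict inequality being supplied by \(Q\) alone. One small correction of emphasis: the reliable lever for strictness is the positive-definiteness decomposition of \autoref{thm:linearconvI} (your ``alternative'' \(H_\rho\)-route), not the equality case of the Cauchy--Schwarz step in \eqref{eq:GRiesz}, which the paper's own remark warns is not fully characterised.
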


The proof is exactly the same as that of \autoref{thm:Gselfinvaratt}, with \(T\) replaced by \(R\). The crucial strict inequality is again provided by \(Q\) only.

\section{Other Nonlinearity}
\label{sec:other-nonlinearity}

That is, terms of the form \( \int F(\abs{u}) \), where \( F\) satisfies certain conditions, as, for example, in the nonlinear Schr\"odinger equation. It turns out that we require a specific form of convexity:

\begin{lemma}
  Let \( F(x,u)=f(x,\abs{u}^p) \) be convex in \(\abs{u}^p\) for almost all \(x\), and \( F(g.x,u)=F(x,u) \) for almost every \(g\). Then
  \begin{equation}
    \int_X F(x,u(x)) \, dx \geq \int_X F\left( x, M_p(u)(x) \right) \, dx.
  \end{equation}
  Moreover, if \( f \) is strictly convex in its second argument, equality holds precisely when \(\abs{u} = M_p(u)\).
\end{lemma}

\begin{proof}
  We have
  \begin{equation*}
    \int_X F(x,u(x) ) \, dx = \int_X \int_G F(g.x,u(g.x)) \, dg \, dx = \int_X \int_G F(x,u(g.x)) \, dg \, dx,
  \end{equation*}
  by the same process as the previous proofs. We now apply Jensen's inequality, which in this case says
  \begin{align*}
    \int_G F(x,u(g.x)) \, dg &= \int_G f\left(x,\abs{u(g.x)}^p\right) \, dg \\
                            &\geq f\left( x , \int_G \abs{u(g.x)}^p dg \right) \\
                            &= F\left(x, M_p(u)(x)\right)
  \end{align*}
  for almost every \(x\), and integrating both sides gives the result. The second part follows from the equality case in Jensen's inequality.
\end{proof}

Examples of such functions include \( F(x,u) = a(x)\abs{u}^{p} \) for \(p > 2\), \(a>0\), \(a\) \(G\)-invariant. Exactly the same argument as before gives that the minimisers of energy functionals including this type of nonlinearity are \(G\)-invariant; we may also, if the convexity is strict, replace \(Q\) by this term completely.



\printbibliography

\end{document}